\newtheorem{lemma}{Lemma}[section]
\newtheorem{corollary}[lemma]{Corollary}
\DeclareMathOperator{\diag}{diag}
\tikzstyle{stage}=[circle,text centered,draw=black,fill=white]
\tikzstyle{arrow} = [thick,->,>=stealth]
\tikzstyle{stage2}=[circle,text centered,draw=black,fill=white, minimum size=1cm]
\date{\today}
\title{A Framework for Studying Transients in Marine Metapopulations\thanks{This work was supported by the Natural Science and Engineering Research Council of Canada (NSERC) Discovery Grants (MAL and PvdD).}}
\author{Peter D. Harrington\thanks{Department of Mathematical and Statistical Sciences, University of Alberta, Edmonton, Canada. \newline (\href{mailto:harringt@ualberta.ca}{harringt@ualberta.ca},  \href{mailto:mark.lewis@ualberta.ca}{mark.lewis@ualberta.ca})},\,  Mark A. Lewis\footnotemark[2] \thanks{Department of Biological Sciences, University of Alberta, Edmonton, Canada.},\, and P. van den Driessche\thanks{Department of Mathematics and Statistics, University of Victoria, Victoria, Canada. (\href{mailto:vandendr@uvic.ca}{vandendr@uvic.ca})}}   
\begin{document}

\maketitle

\begin{abstract}
Transient dynamics can often differ drastically from the asymptotic dynamics of systems. In this paper we provide a unifying framework for analysing transient dynamics in marine metapopulations, from the choice of norms to the addition of stage structure. We use the $\ell_1$ norm, because of its biological interpretation, to extend the transient metrics of reactivity and attenuation to marine metapopulations,  and use examples to compare these metrics under the more commonly used $\ell_2$ norm. We then connect the reactivity and attenuation of marine metapopulations to the source-sink distribution of habitat patches and demonstrate how to meaningfully measure reactivity when metapopulations are stage-structured. 
\end{abstract}

\textit{Keywords}: {\small transient dynamics, metapopulation, reactivity, source-sink dynamics, marine systems}\\

\textit{AMS subject classifications}: {\small 92D25, 92D40, 34A30, 34C11}

\section{Introduction}
\label{sec:intro}

Transient dynamics, those that occur over short timescales, can often be vastly different from the asymptotic or long term dynamics of ecological systems. However, throughout the history of mathematical biology much of the work has focused on determining the asymptotic dynamics of biological systems. While the study of long-term dynamics has given ecologists many tools to analyze the behaviour of populations, these tools are often not the same as those required to understand transient dynamics. Recently Hastings et al. \cite{Hastings2018} have shown that transient dynamics are much more ubiquitous than previously assumed and long transients occur in many different ecological systems, from plankton and coral to voles and grouse. Studying the transient dynamics of an ecological system can give useful insight into the different processes that may occur after a disturbance, change in environmental conditions, or change in human intervention to a system. In some marine systems that are driven by environmental fluctuations, such as the Dungeness crab, transient dynamics may in fact be key to understanding how these systems behave \cite{Higgins1997}. 

There has also been a recent push to characterize the different types of systems which display long transient dynamics that differ significantly from their asymptotic dynamics \cite{Morozov2020, Hastings2018, Hastings2004, Hastings2001}. Hastings et al. \cite{Hastings2018} have loosely categorized four different drivers of long transient dynamics in ecological systems: ghost attractors and crawl-bys, slow-fast dynamics,  high dimensionality, and stochastic noise. These categories are not always distinct and certain systems may indeed fall into multiple categories. For example, a predator prey system may have a crawl-by past a saddle node which drives the transient dynamics in this system, but this could also be thought of as a difference in timescales of the predator decline due to lack of prey. For metapopulations the main driver of transient dynamics is often the high dimensionality arising due to spatial structure, though often these transient dynamics are exacerbated by the other drivers as well.

Some of the earliest studies of systems that could generate long transients were systems with spatial structure \cite{Hastings1994, lloyd1996}. It seems intuitive that spatial structure or spatial heterogeneity can drive some sort of transient dynamics in a system.  If individuals start in one location in a habitat, especially a poor habitat, then it will take time before they can spread over the entire habitat and the long-term population dynamics begin to emerge. What is surprising is that spatial structure can also give rise to so called long-lived transients, where the transient dynamics are extensive enough that they continue on timescales past which we typically measure biological populations \cite{Hastings1994}. 

One method of adding spatial structure to a population is to formulate it as a metapopulation, where distinct populations live on habitat patches which are connected via dispersal or migration. Metapopulation models were originally proposed by Levins  \cite{Levins1969} to model patch occupancy in habitats consisting of isolated habitat patches, but these early models used space implicitly rather than explicitly. Later metapopulation models have included space explicitly by allowing for differing habitat quality on patches or differing dispersal between patches \cite{Hanski1994, Gyllenberg1997b}, though often these models are focused on the proportion of occupied patches rather than the population size on each patch. However, many marine metapopulation models as well as epidemiological metapopulation models explicitly track the number of individuals on each patch as well as movement or dispersal between patches \cite{lloyd1996, Arino2003, Figueira2006, Armsworth2002}. In this paper we model the metapopulation structure following this spatially explicit framework where individuals are tracked rather than the proportion of occupied patches.

Another benefit of the metapopulation framework is that habitat patches can be classified into source patches and sink patches. This classification can occur in many different ways \cite{Figueira2006, Pulliam1988, Krkosek2010a}, but commonly a source is a productive habitat patch and a sink is a poor habitat patch. Early measures of sources and sinks were mainly focused on connectivity between patches, however more recently it has been understood that it is the interplay between patch connectivity and local patch productivity that characterizes patches as sources or sinks. One of the new and easily tractable metrics that embodies this relationship comes from the theory of next-generation matrices and the basic reproduction number, $R_0$. This framework, originally developed in epidemiology, has been used to characterize sources and sinks in populations of mussels, salmon, and sea lice on salmon farms \cite{Krkosek2010a, Huang2015, Harrington2020}. 

While metapopulation theory has previously been used to classify patches as sources and sinks, other metrics have been used to characterize the transient dynamics of systems. Reactivity was initially introduced by Neubert and Caswell \cite{Caswell1997} to measure the maximum initial growth rate of a system over all possible perturbations from an equilibrium. If the maximum initial growth rate is positive, then the system is reactive. Complimenting reactivity is the amplification envelope, which is the maximum possible amplification at time $t$ that can be achieved by a perturbation. Later, Townley and Hodgson \cite{Townley2008} introduced attenuation as the opposite metric to measure initial decline of populations; a system attenuates if the minimum possible growth rate declines following a perturbation. Reactivity and attenuation are then most interesting when they are different from the stability of the equilibrium of a system --- when a system attenuates but is unstable, or is reactive but stable --- and it is in these situations that we focus this paper.

It should be noted that reactivity, attenuation and the amplification envelope are all defined from the linearization of a non-linear system about an equilibrium. These measures are therefore most useful around hyperbolic equilibria, where the dynamics of the non-linear system can be well approximated by the dynamics of the linear system. If an equilibrium is not hyperbolic then the trajectories in the non-linear system may no longer be similar to the linearization by which reactivity, attenuation, and the amplification envelope are defined. Even around a hyperbolic equilibrium the trajectories of the non-linear and linearsized systems may diverge as they move away from the equilibrium. Here we use the technique of linearization to determine reactivity and attenuation as others have before us, but want to emphasize these caveats as they are often brushed over in the transient literature.

In this paper we apply these transient measures of growth to a class of biological metapopulation models where there is no  migration between population patches, only birth on new patches. These are a subset of birth-jump processes \cite{Hillen2015} and include models for marine meroplanktonic species, where larvae can travel through the ocean between population patches but adults remain confined to a habitat patch. Specific species that exhibit this structure include sea lice \cite{Adams2015}, corals and coral reef fish \cite{Cowen2006, Jones2009}, barnacles \cite{Roughgarden1988}, Dungeness crabs \cite{Botsford1994}, sea urchins \cite{Botsford1994}, and many other benthic marine species \cite{Cowen2009}. This type of system also encompasses many plant species where seeds are carried between suitable habitat patches \cite{Husband1996}, and depending on the census timing could also include insect species where there is one large dispersal event between habitat patches, such as the spruce budworm \cite{Ludwig1978, Morris1963, Williams2000} and mountain pine beetle \cite{Safranyik2007}. Lastly this class of models also includes multi-patch or multi-city epidemiological metapopulation models where infections can spread between patches, for example  infected residents of a city may travel and infect residents of other cities before returning home \cite{Arino2003}.

The paper is structured as follows. First we extend the general theory of transients to marine and other birth-jump metapopulations. We then demonstrate that even for linear two-patch metapopulations it is possible to have transient dynamics that are vastly different than the asymptotic dynamics of these populations and moreover that the time for which these transient dynamics occur can extend for an arbitrarily long amount of time. Next, we examine how the structure of the metapopulation can enhance the transient dynamics of metapopulations. Then we demonstrate how to connect the transient behaviour of the metapopulation to its source-sink distribution. Finally, we discuss how to measure the transient dynamics of stage-structured metapopulations in a useful and biological meaningful manner.

\section{Extending the general theory of transients to metapopulations}
\label{sec:gentheory}
In this section we apply the metrics of reactivity \cite{Caswell1997} and attenuation \cite{Townley2008} to general systems of single-species metapopulations. We demonstrate that often reactivity and attenuation in metapopulations have simple closed forms, especially if we are analysing how population size responds to initial perturbations. In order to present our work in a general form, we model the dynamics of a metapopulation of a single species on $n$ patches around the zero equilibrium with the system:
\begin{equation}
x'=Ax,\label{eq:linear}
\end{equation}
where $A=[a_{ij}]$ is a real irreducible matrix of order $n$. This most often represents the linearization of a non-linear system, which more completely captures the dynamics of the population but could also represent the full dynamics of a linear system if density dependence was not important to the population dynamics. 

For the analyses in this paper we focus on biologically realistic single-species metapopulations where the entries of $x(t)$ are non-negative when beginning with a non-negative initial condition, $x(0)$. This condition is equivalent to requiring that $A$ be an essentially non-negative (Metzler matrix), such that all the off-diagonal entries of $A$ are non-negative (Thm 2.4, \cite{Thieme2009}). Biologically this means that the presence of individuals on one patch cannot contribute to the decline of a population on another patch and that the population on each patch will not become negative. 

\subsection{Reactivity and attenuation using the $\ell_1$ norm}

To analyze the transient dynamics of this metapopulation let us introduce some notation from Neubert and Caswell \cite{Caswell1997}. We say that a population is \textit{reactive} if there is an initial condition such that the initial growth rate of the total population is positive. Using notation from Lutscher and Wang \cite{Lutscher2020}, reactivity is formally defined as:
\begin{equation}
\bar{\sigma}=\max_{||x_0||\neq 0} \left[ \frac{1}{||x||}\frac{d||x||}{dt}\biggr\rvert_{t=0}\right].\label{eq:reactive}
\end{equation}
If $\bar{\sigma}>0$ then a system is reactive, and if $\bar{\sigma}\leq 0$ then the system is not reactive. Neubert and Caswell \cite{Caswell1997} use the $\ell_2$ norm to measure the population size, as this allows for the simplification that $\bar{\sigma}$ is the max eigenvalue of $(A+A^T)/2$. However, the $\ell_2$ norm lacks a reasonable biological interpretation, and so others have instead used the $\ell_1$ norm to define reactivity \cite{Huang2015, Townley2007, Stott2011}. Biologically, the $\ell_1$ norm, 
\begin{equation*}
||x||_1= \sum_{i=1}^n |x_i|
\end{equation*}
can be interpreted as the total population on all patches of a metapopulation. Moreover in the metapopulation framework using the $\ell_1$ norm is also convenient to determine reactivity from the population matrix $A$.

Similar to reactivity, we also say that a population \textit{attenuates} if there is an initial condition for which the initial growth rate of the total population declines \cite{Townley2008}. This is formally defined as 
\begin{equation}
\underline{\sigma}=\min_{||x_0||\neq 0} \left[ \frac{1}{||x||}\frac{d||x||}{dt}\biggr\rvert_{t=0}\right].
\end{equation}
If $\underline{\sigma}<0$ then the system attenuates, and if $\underline{\sigma}\geq 0$ then the system does not attenuate. Comparing the definitions of attenuation and reactivity we can see that it is possible for a system to be both reactive and to attenuate, if there are certain initial conditions for which $\bar{\sigma}>0$ is achieved and others such that $\underline{\sigma}<0$. In fact, reactivity and attenuation are most interesting when they are different from the stability of the system, as this is when the transient dynamics are different than the long term dynamics. This is when a system is reactive but stable, so that the total population initially grows but eventually declines, or when a system attenuates but is unstable, so that the total population declines but eventually grows. It should also be noted that the only systems which are not reactive and do not attenuate are those in which the total population size remains constant for all time.

The last measures that we define here to use in some later sections are the \textit{amplification envelope} and the \textit{maximum amplification}. The amplification envelope is the maximum possible amplification at time $t$ from a perturbation $x_0$ and is defined mathematically as:
\begin{equation}
\rho (t)= \max_{||x_0|| \neq 0} \frac{||x(t)||}{||x_0||}.
\label{eq:ampenv} 
\end{equation}
The maximum amplification is simply the maximum of the amplification envelope over all time:
\begin{equation}
\rho_{\max}=\max_{t\geq 0} \rho(t)=\max_{\substack{ t \geq 0 \\ ||x_0||\neq 0}} \frac{||x(t)||}{||x_0||}.
\label{eq:maxamp}
\end{equation}
While reactivity and attenuation quantify the short time response to a perturbation, the amplification envelope and maximum amplification quantify how large a perturbation can become and how long growth can last. It is for these purposes that we use the amplification envelope  and maximum amplification in subsection \ref{sec:largegrowth} and section \ref{sec:largepatch}.

Now before quantifying the reactivity and attenuation of the entire metapopulation, let us first determine the initial growth rate of the population if we begin with one individual on patch $j$. We call this initial growth rate  $\lambda_j$, and mathematically we define 
\begin{equation*}
\lambda_j=\sum_{i=1}^n x_i'(0),
\end{equation*}
with $x(0)=e_j$, where $e_j$ is the vector of length $n$ with $1$ in the $j$th entry and $0$s elsewhere. In terms of system (\ref{eq:linear}) this simplifies to the $j$th column sum of $A$,
\begin{equation*}
\lambda_j=\sum_{i=1}^n a_{ij}.
\end{equation*}
The initial growth rate for a given patch $j$, $\lambda_j$, can also be calculated from the lifecycle graph as the sum of all the outgoing birth rates from a patch minus the death rate on that patch, where any paths describing movement of individuals between patches are ignored. 
  We can then connect this patch specific initial growth rate with the total growth rate, or reactivity, using the following lemma.

\begin{lemma}
Assuming that population sizes are non-negative on each patch, so that $x_j\geq 0$ for all $j$, then for system (\ref{eq:linear}) under the $\ell_1$ norm, 
\begin{equation*}
\bar{\sigma}=\max_{1\leq j\leq n} \lambda_j=\max_{1\leq j\leq n}\sum_i a_{ij}.
\end{equation*}
\label{lemma:reactive}
\end{lemma}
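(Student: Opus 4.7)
The plan is to exploit the fact that under the non-negativity assumption the $\ell_1$ norm becomes linear, turning the reactivity ratio into a weighted average of the column sums $\lambda_j$, and then maximize over the probability simplex.

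First I would note that because $A$ is a Metzler (essentially non-negative) matrix, solutions of \eqref{eq:linear} starting from $x_0 \geq 0$ remain non-negative for all $t\geq 0$, so the assumption $x_j\geq 0$ is consistent with the forward flow. This lets me drop the absolute values in the $\ell_1$ norm and write $\|x\|_1 = \mathbf{1}^T x$, where $\mathbf{1}$ denotes the all-ones vector of length $n$. Differentiating gives
\begin{equation*}
\frac{d\|x\|_1}{dt} \;=\; \mathbf{1}^T x' \;=\; \mathbf{1}^T A x \;=\; \sum_{j=1}^n \Bigl(\sum_{i=1}^n a_{ij}\Bigr) x_j \;=\; \sum_{j=1}^n \lambda_j x_j.
\end{equation*}

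Next, substituting into the definition \eqref{eq:reactive} of reactivity, evaluated at $t=0$, yields
\begin{equation*}
\frac{1}{\|x_0\|_1}\frac{d\|x\|_1}{dt}\bigg|_{t=0} \;=\; \frac{\sum_{j=1}^n \lambda_j (x_0)_j}{\sum_{j=1}^n (x_0)_j}.
\end{equation*}
Since $(x_0)_j \geq 0$ and not all zero, the right-hand side is a convex combination of $\lambda_1,\dots,\lambda_n$ with weights $(x_0)_j/\|x_0\|_1$. The standard fact that a convex combination of real numbers cannot exceed the maximum of those numbers immediately gives $\bar{\sigma}\leq \max_{j}\lambda_j$, and equality is attained by choosing $x_0 = e_k$ where $k\in\arg\max_j \lambda_j$, which is a valid non-negative initial condition. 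Combining the upper bound with the achieved value gives $\bar{\sigma}=\max_j \lambda_j = \max_j \sum_i a_{ij}$.

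I do not expect a serious obstacle here; the only subtlety worth flagging is the justification for replacing $\|x\|_1$ by $\mathbf{1}^T x$, which requires the non-negativity hypothesis rather than being automatic from the norm definition. A parallel argument (minimizing the convex combination in place of maximizing) would later give the analogous formula $\underline{\sigma} = \min_j \lambda_j$ for attenuation.
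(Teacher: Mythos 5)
Your proof is correct and follows essentially the same route as the paper's: drop the absolute values using non-negativity, write the rate of change of $\|x\|_1$ as $\mathbf{1}^T A x_0=\sum_j \lambda_j (x_0)_j$, and observe that maximizing this weighted average over the simplex is achieved at a standard basis vector $e_k$. Your convex-combination phrasing is just a restatement of the paper's normalization $\|x_0\|=1$ and subsequent bounding argument, so there is nothing further to add.
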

\begin{proof}

Since $x_j\geq 0$ for all $j$, the absolute value signs in (\ref{eq:reactive}) can be dropped and so
\begin{align*}
\bar{\sigma}&=\max_{||x_0||=1} \left[ \frac{d||x||}{dt}\biggr\rvert_{t=0}\right]\\
&=\max_{||x_0||=1} \left[ \frac{d}{dt}\sum_{i=1}^n x_i \biggr\rvert_{t=0}\right]\\
&=\max_{||x_0||=1} \left[ \sum_{i=1}^n \frac{d}{dt}x_i \biggr\rvert_{t=0}\right]\\
&=\max_{||x_0||=1} \left[ 1^T x'\biggr\rvert_{t=0}\right].
\end{align*}
Substituting $x'=Ax$ from system (\ref{eq:linear}) gives 
\begin{align*}
\bar{\sigma}&=\max_{||x_0||=1} \left[ 1^T Ax_0 \right]\\
&=\max_{||x_0||=1} \left[ \sum_{j=1}^n \left(\sum_{i=1}^n a_{ij}\right)x_{0j} \right]
\end{align*}

Now let $k$, with $1\leq k\leq n$ maximize $\sum_{i=1}^n a_{ij}$, so that $\sum_{i=1}^n a_{ik}=\max_{1\leq j\leq n} \sum_{i=1}^n a_{ij}$. Then, under $||x_0||=1$, 

\begin{align*}
\sum_{j=1}^n \left(\sum_{i=1}^n a_{ij}\right)x_{0j} &\leq  \left(\sum_{i=1}^n a_{ik}\right) \sum_{j=1}^n x_{0j}\\
&=\left(\sum_{i=1}^n a_{ik}\right),
\end{align*}
because $||x_0||=1$, with equality when $x_{0j}=\begin{cases}
1 & j=k\\
0 & j\neq k
\end{cases}$. Therefore 

\begin{align*}
\bar{\sigma}&=\max_{1\leq j\leq n} \left(\sum_{i=1}^n a_{ij}\right)\\
&=\max_{1\leq j\leq n} \lambda_j.
\end{align*}
\end{proof}

With a similar proof we can connect the patch specific initial growth rate to attenuation via the following lemma:

\begin{lemma}
Assuming that population sizes are non-negative on each patch, so that $x_j\geq 0$ for all $j$, then for system (\ref{eq:linear}) under the $\ell_1$ norm, 
\begin{equation*}
\underline{\sigma}=\min_{1\leq j\leq n} \lambda_j=\min_{1\leq j\leq n} \sum_i a_{ij}.
\end{equation*}
\end{lemma}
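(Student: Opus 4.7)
The plan is to mirror the proof of Lemma 2.1 step by step, replacing each maximization with a minimization. The hypothesis $x_j \geq 0$ is identical, so the opening reductions transfer without change: by homogeneity of $(1/\|x\|)(d\|x\|/dt)$ in $x_0$, I can restrict to $\|x_0\|_1 = 1$; because all components are non-negative, the absolute values in the $\ell_1$ norm drop, so that $\|x\|_1 = \sum_i x_i$ and $\frac{d}{dt}\sum_i x_i \big|_{t=0} = 1^T x'(0) = 1^T A x_0$. Rewriting the right-hand side column-by-column gives $1^T A x_0 = \sum_{j=1}^n \lambda_j x_{0j}$, with $\lambda_j = \sum_i a_{ij}$.

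The core step is then a linear-programming observation dual to the one used for reactivity. I want to minimize the linear functional $\sum_{j=1}^n \lambda_j x_{0j}$ over the feasible set $\{x_0 \geq 0 : \sum_j x_{0j} = 1\}$, i.e.\ over the standard $(n-1)$-simplex. Any such $x_0$ makes the objective a convex combination of $\lambda_1,\dots,\lambda_n$, so
\begin{equation*}
\sum_{j=1}^n \lambda_j x_{0j} \;\geq\; \Bigl(\min_{1 \leq j \leq n} \lambda_j\Bigr) \sum_{j=1}^n x_{0j} \;=\; \min_{1 \leq j \leq n} \lambda_j.
\end{equation*}
To witness equality, let $k$ minimize $\lambda_j$ and take $x_0 = e_k$; then $\sum_j \lambda_j x_{0j} = \lambda_k$. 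This establishes $\underline{\sigma} = \min_j \lambda_j = \min_j \sum_i a_{ij}$.

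I do not expect a genuine obstacle. The one subtlety worth flagging, which is the same as in Lemma 2.1, is the role of the non-negativity hypothesis: without $x_0 \geq 0$ the unit $\ell_1$ ball includes vectors with negative components and the minimum of $\sum_j \lambda_j x_{0j}$ would drop below $\min_j \lambda_j$ (in fact to $-\max_j |\lambda_j|$), giving a different and biologically meaningless answer. The assumption $x_j \geq 0$ is precisely what collapses the feasible set to the simplex and lets the bound-plus-witness argument close.
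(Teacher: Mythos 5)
Your proof is correct and is exactly the argument the paper intends: the paper omits the proof of this lemma, stating only that it follows "with a similar proof" to the reactivity lemma, and your min-over-the-simplex version (convex-combination lower bound plus the witness $x_0=e_k$) is precisely that mirror image. No gaps.
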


\subsection{Comparing reactivity in the $\ell_1$ and $\ell_2$ norms}
\label{sec:reacl1l2}

In this section we present some examples of systems that are reactive in $\ell_1$ but not in $\ell_2$ and vice versa to illuminate the difference between measuring reactivity in the two norms. It has previously been noted that reactivity depends on the norm in which it is measured \cite{Caswell1997, Lutscher2020}, and the following examples help clarify the underlying biological and mathematical meaning of the two norms. The $\ell_1$ norm,
\begin{equation*}
||x||_1= \sum_{i=1}^n |x_i|,
\end{equation*}
represents the total population size of the metapopulation, whereas the $\ell_2$ norm,
\begin{equation*}
||x||_2= \sqrt{\sum_{i=1}^n x_i^2},
\end{equation*}
represents a sort of Euclidean distance of the metapopulation away from the origin.

\begin{figure}[!htb]
\centering
\includegraphics[width=9cm]{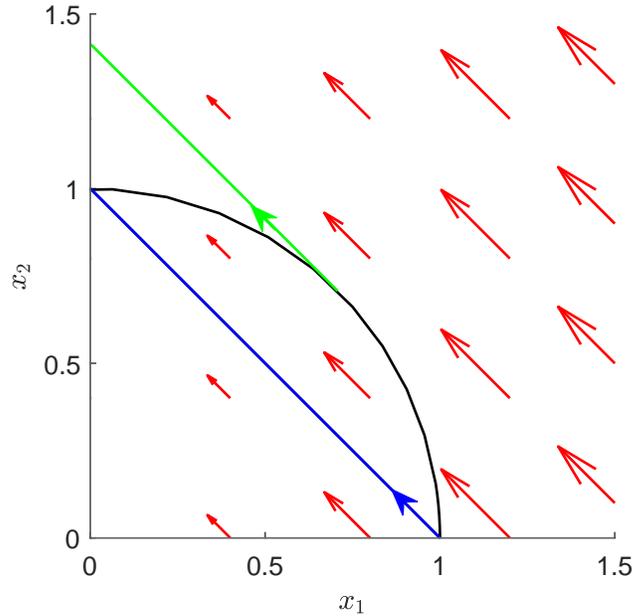}
\caption{The phase plane for system (\ref{eq:linear}) with 
$A=\begin{bmatrix}
-1&0\\
1&0
\end{bmatrix},$ which is reactive in $\ell_2$ but not in $\ell_1$. The line $x_1+x_2=1$ and the circle $x_1^2+x_2^2=1$ geometrically depict $||x||=1$ in the $\ell_1$ and $\ell_2$ norms respectively. The derivative vectors for the phase plane are shown in red and two different initial trajectories are shown in green and blue. The green trajectory is an example that is reactive in $\ell_2$, but not in $\ell_1$, and the blue trajectory is another example that is not reactive in $\ell_1$.}
\label{fig:reactl1}
\end{figure}

\subsection*{Example 1}
First, we present an example that is reactive in $\ell_2$ but not in $\ell_1$. Take system (\ref{eq:linear}) with 
\[A=\begin{bmatrix}
-1&0\\
1&0
\end{bmatrix}.\] This system simply redistributes individuals from patch 1 to patch 2, and the phase plane is shown in Figure  \ref{fig:reactl1}. It is not reactive in the $\ell_1$ norm because the total population size is not increasing, but it is reactive in $\ell_2$. This highlights how measuring reactivity in the $\ell_2$ norm can at times defy our biological expectation of what reactivity should mean --- the growth of a population --- and reinforces our rationale for using the $\ell_1$ norm to measure reactivity in metapopulations. While the matrix $A$ is reducible and this system is only semi-stable, and thus may be considered a borderline example, if $a_{22}$ is replaced by a small negative number, $-\epsilon$, and $a_{12}$ is replaced by a small positive number, $\epsilon/2$, then for sufficiently small $\epsilon$, $A$ will be irreducible and the system will now be stable, but will still be reactive in $\ell_2$ and not in $\ell_1$.

\subsection*{Example 2}
The second example, which is reactive in $\ell_1$ but not in $\ell_2$ is system (\ref{eq:linear}) with 
\[A=\begin{bmatrix}
-1&3/2\\
1/3&-1
\end{bmatrix},\] where the phase plane is shown in Figure \ref{fig:reactl2}. Now the system is reactive in $\ell_1$ because if we start with one individual on the second patch (the dynamics governed by the second row of $A$) the total population grows, but in such a way that it will not be reactive in $\ell_2$. This example demonstrates that again reactivity in $\ell_2$ can defy our biological expectation of reactivity, but now in the opposite way. Here the total population grows, yet the system is not reactive in $\ell_2$.  Note that this system is equivalent to system (\ref{eq:asymgrow1}) with $\epsilon=3$.

Together, the two examples highlight the differences that can occur when measuring reactivity in different norms and the caution that should be taken when interpreting reactivity in the $\ell_2$ norm biologically. Here we only present examples which are reactive in $\ell_2$ but not in $\ell_1$ and vice versa but it is also possible to find examples of systems which attenuate in $\ell_2$ but not in $\ell_1$.

\begin{figure}[!htb]
\centering
\includegraphics[width=9cm]{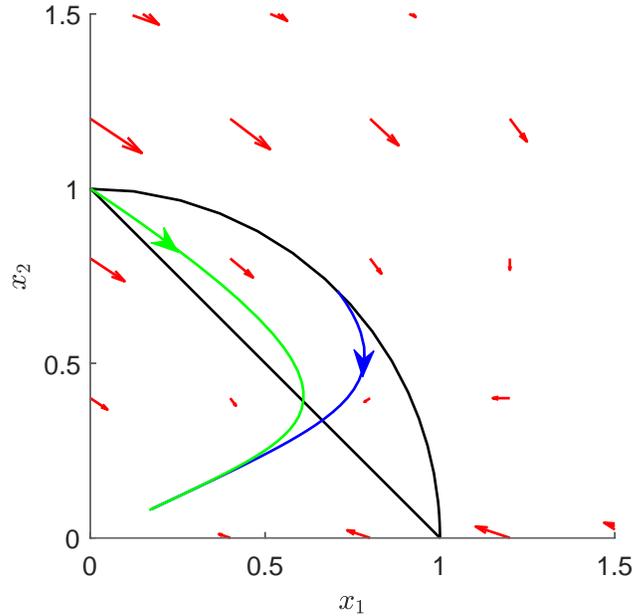}
\caption{The phase plane for system (\ref{eq:linear}) with 
$A=\begin{bmatrix}
-1&3/2\\
1/3&-1
\end{bmatrix},$ which is reactive in $\ell_1$ but not in $\ell_2$. The line $x_1+x_2=1$ and the circle $x_1^2+x_2^2=1$  geometrically depict $||x||=1$ in  the $\ell_1$ and $\ell_2$ norms respectively. The derivative vectors for the phase plane are shown in red and two different initial trajectories are shown in green and blue. The green trajectory is an example that is reactive in $\ell_1$, but not in $\ell_2$, and the blue trajectory is another example that is not reactive in $\ell_2$.}
\label{fig:reactl2}
\end{figure}

\subsection{The relationship between stability and reactivity/attenuation}

Now that we have presented a couple of examples that demonstrate the difference between reactivity in the $\ell_1$ and $\ell_2$ norms, we show that in both norms if a system is stable it attenuates, and if a system is unstable it is reactive. While we hypothesize that this is true in all norms, these are necessary properties for a norm to be useful in measuring reactivity or attenuation and thus we present the proofs in $\ell_1$ and $\ell_2$ for completeness.

We begin by presenting the proof for the $\ell_1$ norm. 

\begin{lemma}
If the $x=0$ equilibrium for \[x'=Ax\] is stable and $A$ is a Metzler matrix, then $\min \lambda_j<0$, where $\lambda_j=\sum_{i=1}^n a_{ij}$, and thus the system attenuates in the $\ell_1$ norm. Likewise if the $x=0$ equilibrium is unstable then $\max \lambda_j>0$ and so the system is reactive in the $\ell_1$ norm.
\end{lemma}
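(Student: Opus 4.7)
The plan is to exploit Perron--Frobenius theory for irreducible Metzler matrices in order to tie the spectral abscissa $s(A)$ (the largest real part among the eigenvalues of $A$) to a weighted combination of the column sums $\lambda_j$. Since $A$ is an irreducible Metzler matrix, choosing $c$ large enough that $B = A + cI$ has nonnegative entries produces a nonnegative irreducible matrix, and Perron--Frobenius then supplies a simple dominant eigenvalue $\rho(B)$ with a strictly positive right eigenvector $v > 0$. This same $v$ is an eigenvector of $A$ with eigenvalue $s(A) = \rho(B) - c$, which is exactly the spectral abscissa of $A$.

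With $v$ in hand, I would evaluate $1^T A v$ in two ways. On the one hand, $1^T(Av) = s(A)\, 1^T v$. On the other, summing column by column,
\[
1^T A v = \sum_{j=1}^n \Bigl(\sum_{i=1}^n a_{ij}\Bigr) v_j = \sum_{j=1}^n \lambda_j v_j.
\]
Equating gives $\sum_j \lambda_j v_j = s(A)\, \sum_j v_j$, and since every $v_j > 0$ the sign of this weighted sum agrees with the sign of $s(A)$.

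The two cases then fall out immediately. Stability of the zero equilibrium means $s(A) < 0$, which forces $\sum_j \lambda_j v_j < 0$; with every $v_j > 0$, at least one $\lambda_j$ must be negative, so $\min_j \lambda_j < 0$ and the attenuation companion to Lemma~\ref{lemma:reactive} delivers $\underline{\sigma} < 0$. Instability means $s(A) > 0$ (since for an irreducible Metzler matrix the dominant eigenvalue is real and equal to $s(A)$), and the symmetric argument yields some $\lambda_j > 0$, so by Lemma~\ref{lemma:reactive} the system is reactive in $\ell_1$.

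The only real obstacle is invoking the right form of Perron--Frobenius for Metzler matrices to secure a \emph{strictly} positive eigenvector rather than a merely nonnegative one; irreducibility is what supplies this, and once $v > 0$ is available the sign-chasing argument above is essentially automatic.
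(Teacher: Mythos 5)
Your argument is correct and is essentially the proof the paper gives: both rest on the identity $1^T A v = s(A)\,1^T v = \sum_j \lambda_j v_j$ for the eigenvector $v$ of the dominant eigenvalue, followed by the same sign-chasing. The only cosmetic difference is that you construct $v$ via the shift $A + cI$ and Perron--Frobenius, while the paper cites a Metzler-matrix theorem directly for a nonnegative normalized eigenvector; note that strict positivity of $v$ is not actually needed, since nonnegativity with $\sum_j v_j = 1$ already gives $\min_j \lambda_j \leq \sum_j \lambda_j v_j$.
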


\begin{proof}

Let $A$ be a Metzler matrix, then \[\mu(A)=\max\{\Re (\lambda):\lambda \in \sigma(A)\},\] and $\mu(A)$ is an eigenvalue of $A$ with a non-negative normalized eigenvector $v$, with $\sum_{j=1}^n v_j=1$, such that $Av=\mu(A)v.$ (Thm A.43, \cite{Thieme2003}).

Therefore if the initial population is arranged according to the eigenvector $v$, then the initial growth rate is

\begin{align*}
\sum_{i=1}^n\frac{dx_i}{dt}&=1^Tx'|_{x_0=v}\\
&=1^TAv\\
&=1^T\mu(A)v.
\end{align*}

If the $x=0$ equilibrium for $x'=Ax$ is stable, then $\mu(A)<0$, and therefore \[\sum_{i=1}^n\frac{dx_i}{dt}=1^T\mu(A)v<0.\] Let $\lambda_k=\min_j \lambda_j$, then \[\lambda_k \leq \sum_{j=1}^n\lambda_jv_j = 1^TAv=1^T\mu(A)v<0,\] as $v$ is normalized. Thus if the $x=0$ equilibrium for $x'=Ax$ is stable, then $\min \lambda_j<0$, and so the system attenuates in the $\ell_1$ norm, and the minimum growth rate occurs when the initial population is all on a single patch.

Similarly if the $x=0$ equilibrium for $x'=Ax$ is unstable, then $\mu(A)>0$, and therefore \[\sum_{i=1}^n\frac{dx_i}{dt}=1^T\mu(A)v>0.\] Let $\lambda_k=\max_j \lambda_j$, then \[\lambda_k \geq \sum_{j=1}^n\lambda_jv_j = 1^TAv=1^T\mu(A)v>0.\] Thus if the $x=0$ equilibrium for $x'=Ax$ is unstable, then it is also reactive in the $\ell_1$ norm and the maximum growth rate occurs when the initial population is all on a single patch.

\end{proof}

In the $\ell_2$ norm the following lemma gives the corresponding result:

\begin{lemma}

If the $x=0$ equilibrium for $x'=Ax$ is stable then \[\min_{||x_0||_2\neq 0} \left(\frac{1}{||x||_2}\frac{d||x||_2}{dt}\right)\biggr\rvert_{t=0}<0\] and thus the system attenuates in $\ell_2$. Similarly if the $x=0$ equilibrium for $x'=Ax$ is unstable then 
\[\max_{||x_0||_2\neq 0} \left(\frac{1}{||x||_2}\frac{d||x||_2}{dt}\right)\biggr\rvert_{t=0}>0\] and so the system is reactive in $\ell_2$. 

\end{lemma}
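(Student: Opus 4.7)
The plan is to reduce both claims to a Rayleigh-quotient computation and then invoke a Bendixson-type inequality linking the spectrum of $A$ to the spectrum of its symmetric part $H(A)=(A+A^T)/2$.

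First I would compute the quantity $\frac{1}{\|x\|_2}\frac{d\|x\|_2}{dt}\big|_{t=0}$ directly. Since $\|x\|_2^2 = x^T x$, differentiating gives $2\|x\|_2 \frac{d\|x\|_2}{dt} = 2 x^T x' = 2 x^T A x$. Because $x_0^T A x_0$ is a scalar and equals $x_0^T A^T x_0$, this reduces to
\begin{equation*}
\frac{1}{\|x\|_2}\frac{d\|x\|_2}{dt}\bigg|_{t=0} \;=\; \frac{x_0^T H(A)\, x_0}{x_0^T x_0}, \qquad H(A)=\tfrac{1}{2}(A+A^T).
\end{equation*}
This is the Rayleigh quotient of the real symmetric matrix $H(A)$, so by the spectral theorem its maximum and minimum over nonzero $x_0$ are the largest and smallest eigenvalues of $H(A)$ respectively.

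Next I would establish the Bendixson-style bound
\begin{equation*}
\lambda_{\min}(H(A)) \;\leq\; \Re(\lambda) \;\leq\; \lambda_{\max}(H(A)) \qquad \text{for every eigenvalue } \lambda \text{ of } A.
\end{equation*}
For this, let $\lambda\in\sigma(A)$ with (possibly complex) eigenvector $v\neq 0$, so $Av=\lambda v$. Then $v^* A v = \lambda \|v\|^2$, and since $A$ is real,
\begin{equation*}
\overline{v^* A v} \;=\; v^T A \bar v \;=\; v^* A^T v,
\end{equation*}
so $\Re(v^* A v) = v^* H(A) v$, which is real. Combining gives $\Re(\lambda)\|v\|^2 = v^* H(A) v$, and the Rayleigh bound for the symmetric matrix $H(A)$ applied to the complex vector $v$ yields the claimed inequality.

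The conclusions then fall out immediately. If $x=0$ is stable then every eigenvalue of $A$ has $\Re(\lambda)<0$, so $\lambda_{\min}(H(A)) \leq \Re(\lambda) < 0$, and the minimum of the Rayleigh quotient is negative, giving $\ell_2$-attenuation. If $x=0$ is unstable then some eigenvalue $\lambda$ of $A$ has $\Re(\lambda)>0$, so $\lambda_{\max}(H(A)) \geq \Re(\lambda) > 0$, and the maximum Rayleigh quotient is positive, giving $\ell_2$-reactivity. The main subtlety, and the only step that is not a direct calculation, is handling the Rayleigh bound for complex eigenvectors of a real matrix --- this is why the identity $\Re(v^* A v) = v^* H(A) v$ must be derived carefully rather than asserted from the usual real Rayleigh quotient.
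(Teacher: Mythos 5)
Your proposal is correct and follows essentially the same route as the paper: reduce the growth rate to the Rayleigh quotient of $H(A)=(A+A^T)/2$, identify its extrema with $\lambda_{\min}(H(A))$ and $\lambda_{\max}(H(A))$, and sandwich the real part of the dominant eigenvalue of $A$ between them via the Bendixson inequality. The only difference is that you derive the Bendixson bound from scratch (via $\Re(v^*Av)=v^*H(A)v$ for a complex eigenvector $v$) where the paper simply cites it, which makes your argument self-contained but not materially different.
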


\begin{proof}
In the $\ell_2$ norm 
\begin{equation}
\left(\frac{1}{||x||_2}\frac{d||x||_2}{dt}\right)\biggr\rvert_{t=0}=\frac{x_0^TH(A)x_0}{x_0^Tx_0},
\label{eq:rayleigh}
\end{equation} where $H(A)=(A+A^T)/2$ is the Hermitian part of $A$ \cite{Caswell1997}. Furthermore the maximum of the right hand side of equation (\ref{eq:rayleigh}) over all $x_0$ is equal to $\lambda_{\max}(H(A))$, where $\lambda_{\max}(H(A))$ is the maximum eigenvalue of $H(A)$ (Thm 4.2.2, \cite{Horn2012}). Similarly the minimum of the right hand side of equation (\ref{eq:rayleigh}) over all $x_0$ is equal to $\lambda_{\min}(H(A))$, the minimum eigenvalue of $H(A)$. Since $A$ is real, $H(A)$ is real and symmetric and so $\lambda_{\max}(H(A))$ and $\lambda_{\min}(H(A))$ are real.

Therefore $x'=Ax$ is reactive in the $\ell_2$ norm if $\lambda_{\max}(H(A))>0$ and attenuates if $\lambda_{\min}(H(A))<0$. To relate stability to reactivity and attenuation, we use the fact that the real part of the dominant eigenvalue of $A$ (denoted by $\mu(A)$) is between the maximum and minimum eigenvalues of the Hermitian part of $A$ (Fact 5, pg 14-2, \cite{Hogben2006}), \[\lambda_{\min}(H(A))\leq \mu(A) \leq \lambda_{\max}(H(A)).\] If the $x=0$ equilibrium for  $x'=Ax$ is unstable, then \[\lambda_{\max}(H(A))\geq \mu(A) \geq 0\] and so the system is reactive, and if the $x=0$ equilibrium for $x'=Ax$ is stable, then \[\lambda_{\min}(H(A))\leq \mu(A) \leq 0\] and so the system attenuates.

\end{proof}

In this section we have shown how to calculate reactivity and attenuation using the $\ell_1$ norm in metapopulations, proven that if the equilibrium of a system is unstable/stable then the system must be reactive/attenuate in both the $\ell_1$ and $\ell_2$ norms, and demonstrated the difference between reactivity in the $\ell_1$ and $\ell_2$ norms using a couple salient examples. We now return to the motivating feature of this paper --- systems that are reactive and stable or attenuate and are unstable --- and in the following section we provide examples of long lived transients in these systems.

\section{Metapopulations with arbitrarily large transient growth or decay}
\label{sec:arblarge}
Here we examine two different metapopulations, one of which is reactive and can exhibit arbitrarily large transient growth, and the other that attenuates and can decline to arbitrarily small levels. In each case this transient growth differs from the system's long term growth trajectory: the metapopulation that exhibits large growth eventually declines, and the system that declines eventually grows. Both of these example metapopulations are linear systems, and therefore the addition of non-linearities to construct more realistic models could further exacerbate the length of the transient period. These examples are not meant to imply that there are realistic biological metapopulations that can grow arbitrarily large before decaying, but rather to emphasize that the difference between transient dynamics and asymptotic dynamics can be quite stark even in linear systems.

\subsection{Arbitrarily large transient growth}
\label{sec:largegrowth}

First we present a reactive metapopulation that can exhibit arbitrarily large transient growth, but eventually declines. In this metapopulation individuals can either give birth to new individuals on the same patch, or give birth to individuals on the other patch, but there is no migration of individuals between patches. As mentioned in the introduction, this type of model is applicable to many marine metapopulations where adults are sedentary but larvae can disperse, to plant populations where seeds can be carried between habitat patches, or other populations governed by birth-jump processes. Let the metapopulation be described by:
\begin{align}
x'&=(b-d)x+b_{12}y\label{eq:arbgrow}\\
y'&=b_{21}x+(b-d)y\nonumber
\end{align}
so that $b$ is the on patch birth rate and $d$ is the on patch death rate, $b_{12}$ is the birth rate of individuals on patch 2 producing new individuals on patch 1, and $b_{21}$ is the birth rate of individuals on patch 1 producing new individuals on patch 2. The system is linear, so assuming that $(b-d)^2 \neq b_{12}b_{21}$ the only steady state is $x=y=0$.

For the metapopulation to eventually decline, both eigenvalues need to be negative. For system (\ref{eq:arbgrow}) the eigenvalues are 
\begin{align*}
\lambda_1&=(b-d)+\sqrt{b_{12}b_{21}}\\
\lambda_2&=(b-d)-\sqrt{b_{12}b_{21}}
\end{align*}
and thus we require that $(b-d)<0$ and $(b-d)^2>b_{12}b_{21}$. Now in order for the metapopulation to be reactive we need either $b_{12}>-(b-d)$ or $b_{21}>-(b-d)$. Here we choose $b_{21}>-(b-d)$, so that if we start with one individual on patch 1, i.e. $x(0)=1$, $y(0)=0$, the metapopulation initially grows. 

To prove that the metapopulation can grow arbitrarily large, we show that the limit as some parameter approaches 0 of $\max_t (x(t)+y(t))$ is unbounded. In terms of the maximum amplification defined by (\ref{eq:maxamp}), this is equivalent to showing that the limit of the maximum amplification in the $\ell_1$ norm, $\rho_{\max}$, becomes unbounded. This is because the initial condition chosen above is such that $x(t)+y(t)=\rho (t)$ in the $\ell_1$ norm . To take the limit, we must first reduce the parameters in our system until we are left with a single parameter which we can let approach 0, while still maintaining the inequalities above that govern the stability and reactivity of the system. Let $b-d=-1$, $b_{12}=\epsilon/2$, and $b_{21}=1/\epsilon$, where $\epsilon$ is a   small positive parameter that approaches 0.  Our reduced system can now be written as:
\begin{align}
x'&=-x+ \frac{{\epsilon}}{2} y\label{eq:asymgrow1}\\
y'&=\frac{1}{{{\epsilon}}} x-y\nonumber \\
x(0)&=1\nonumber\\
y(0)&=0\nonumber.
\end{align}
This system is stable and the digraph for this system is shown in  Figure \ref{fig:asymgrow}. This system is reactive in $\ell_1$ and $\ell_2$ for small $\epsilon$ and the solution is:
\begin{align*}
x(t)&=\frac{1}{2} \left(e^{-(1-\frac{1}{\sqrt{2}})t}+  e^{-(1+\frac{1}{\sqrt{2}})t}\right)\\
y(t)&=\frac{1}{\sqrt{2}\epsilon}\left(e^{-(1-\frac{1}{\sqrt{2}})t}-e^{-(1+\frac{1}{\sqrt{2}})t}\right).
\end{align*}

\begin{figure}
\centering
\begin{tikzpicture}
\node (x) [stage] {$x$};
\node (y) [stage, right of=x, xshift=1cm] {$y$};
\draw [arrow] (x) to [out=30,in=150] node[above] {$\epsilon^{-1}$} (y);  
\draw [arrow] (y) to [out=210,in=330] node[below] {$\epsilon/2$}(x);
\draw [arrow] (x) edge[out=210,in=150,loop] node[left] {$-1$} (x);
\draw [arrow] (y) edge[out=30,in =330,loop] node[right] {$-1$} (y);
\end{tikzpicture}
\caption{Digraph for system (\ref{eq:asymgrow1}). The directed edges represent the birth rate of individuals on the outgoing patch producing new individuals on the incoming patch. The self loops are the birth rate minus the death rate on a patch.}
\label{fig:asymgrow}
\end{figure}
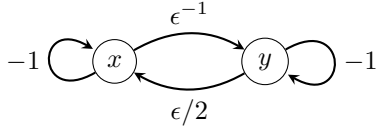

We want to show that 
\begin{equation*}
\lim_{\epsilon \rightarrow 0} \max_t \left(x(t)+y(t)\right)= \lim_{\epsilon \rightarrow 0} \rho_{\max} =\infty.
\end{equation*}
Normally to calculate the maximum we would take the derivative of $(x(t)+y(t))$, set it equal to 0, solve for $t$, and then evaluate $(x(t)+y(t))$ at this value of $t$. However it turns out this is rather complicated, so we will simplify this process by first noting that $x(t)>0$ for all $t$. Therefore 
\begin{equation*}
\max_t \left(x(t)+y(t)\right)> \max_t y(t).
\end{equation*}
Now we only have to perform the above process on $y(t)$, rather than $(x(t)+y(t))$. 

Setting $y'(t)=0$ and solving for $t$, we find that the time that the maximum of $y(t)$ is achieved, $t_{\rm max}$, along with the corresponding maximum in $y$, $y(t_{\rm max})$, are:
\begin{align*}
t_{\rm max}&= \frac{1}{\sqrt{2}}\left(\log(1+\sqrt{2})-\log(-1+\sqrt{2})\right)\\
y(t_{\rm max})&=\frac{\sqrt{2}}{\epsilon}(1+\sqrt{2})^{(-\frac{1}{2}-\frac{1}{\sqrt{2}})}(-1+\sqrt{2})^{(\frac{1}{\sqrt{2}}-\frac{1}{2})}.
\end{align*}
We can clearly see that $\lim_{\epsilon\rightarrow 0} y(t_{\rm max})=\infty$ and thus also  
$\lim_{\epsilon\rightarrow 0} \max_t x(t)+y(t)= \allowbreak  \lim_{\epsilon\rightarrow 0}\rho_{\max} \allowbreak =\infty$.

Therefore even in a two patch metapopulation that is asymptotically stable, there is always a parameter combination for which the total population, and thus also the maximum amplification in the $\ell_1$ norm, $\rho_{\max}$, can initially grow arbitrarily large before they decay. Again, this is  not meant to imply that there are realistic biological metapopulations that can grow arbitrarily large before decaying, but to emphasize how different the transient and asymptotic dynamics of a system can be.

\subsection{Transient decay to arbitrarily small levels}

Similar to the previous section, we now present an example of a metapopulation that attenuates and can decay to an arbitrarily small population size before eventually growing. This metapopulation is unstable mathematically because it eventually grows, but biologically the metapopulation could first go extinct if the total population size decays below one individual before it eventually increases. We  again use a metapopulation where individuals can either give birth to new individuals on their patch or on the other patch, but cannot migrate between patches. The difference between this metapopulation and the example used in the previous section, is that now the on patch birth and death rates differ between patches, but the between patch birth rates are the same. Let the metapopulation be described by:
\begin{align*}
x'&=(b_{1}-d_{1})x+\epsilon y\\
y'&=\epsilon x+(b_2-d_2)y
\end{align*}
where $b_1$ and $d_1$ are the birth and death rates on patch 1, $b_2$ and $d_2$ are the birth and death rates on patch 2, and $\epsilon$ is the interpatch birth rates for both patches.

\begin{figure}
\centering
\begin{tikzpicture}
\node (x) [stage] {$x$};
\node (y) [stage, right of=x, xshift=1cm] {$y$};
\draw [arrow] (x) to [out=30,in=150] node[above] {$\epsilon$} (y);  
\draw [arrow] (y) to [out=210,in=330] node[below] {$\epsilon$}(x);
\draw [arrow] (x) edge[out=210,in=150,loop] node[left] {$-1$} (x);
\draw [arrow] (y) edge[out=30,in =330,loop] node[right] {$\epsilon$} (y);
\end{tikzpicture}
\caption{Digraph for system (\ref{eq:asymp}). The directed edges represent the birth rate of individuals on the outgoing patch producing new individuals on the incoming patch. The self loops are the birth rate minus the death rate on a patch.}
\label{fig:asymdecay}
\end{figure}
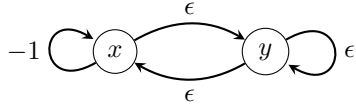

In order for the metapopulation to eventually grow, we assume that the birth rate is greater than the death rate on one of the patches. We choose this to be patch 2, thus we require $b_2-d_2>0$. We also want our population to initially decline when starting on patch 1, for this to occur we assume $b_1-d_1+\epsilon<0$. To prove that the metapopulation can decay to an arbitrarily small population size we reduce the system to have a single parameter and then show that the limit as the parameter approaches 0 of $\min_t (x(t)+y(t))=0$.  Let $b_1-d_1=-1$ and $b_2-d_2=\epsilon$, then our system can be written in terms of a single positive parameter, $\epsilon$, as:

\begin{align}
x'&=- x+\epsilon y \label{eq:asymp}\\
y'&=\epsilon x+ \epsilon y\nonumber\\
x(0)&=1\nonumber\\
y(0)&=0.\nonumber
\end{align}
This system is unstable and the corresponding digraph is shown in  Figure \ref{fig:asymdecay}. It attenuates in both the $\ell_1$ and $\ell_2$ norms for small $\epsilon$.

It is possible to show that the minimum population size can grow arbitrarily small in a manner similar to the previous section, though the calculations are somewhat more complicated. Instead in this section, we perform an asymptotic expansion in terms of $\epsilon$ to demonstrate the limiting behaviour of system (\ref{eq:asymp}). Let $x(t)=x_{0}(t)+\epsilon x_{1}(t)+O(\epsilon^2)$ and $y(t)=y_{0}(t)+\epsilon y_{1}(t)+O(\epsilon^2)$. Then the zero order system is:
\begin{align*}
x_{0}'(t)&=-x_{0}(t)\\
y_{0}'(t)&=0\\
x_{0}(0)&=1\\
y_{0}(0)&=0,
\end{align*}
that has the solution $x_{0}(t)=e^{-t}$ and $y_{0}=0$. We can proceed in a similar manner to solve the first order terms, and then our solution up to order $\epsilon$ is given by:
\begin{align*}
x(t)&=e^{-t}+O(\epsilon^2)\\
y(t)&=\epsilon(1-e^{-t})+O(\epsilon^2).
\end{align*}

This solution is valid for small $t$, and is therefore our inner approximation. To find our outer approximation for large $t$, we rescale $t=\tau/\epsilon$ and arrive at the system:
\begin{align*}
\epsilon X'&=-X+\epsilon Y\\
\epsilon Y'&=\epsilon X+\epsilon Y.
\end{align*}

We can again solve the zero order and first order equations and arrive at the following solution with two undetermined coefficients:
\begin{align*}
X(\tau)&=\epsilon C e^\tau+O(\epsilon^2)\\
Y(\tau)&=Ce^\tau+\epsilon(C\tau e^\tau+(C+K)e^\tau)+O(\epsilon^2).
\end{align*}
To solve our undetermined coefficients we require that $\lim_{t\rightarrow \infty} x(t)=\lim_{\tau \rightarrow 0} X(\tau)$, and $\lim_{t\rightarrow \infty} y(t)=\lim_{\tau \rightarrow 0} Y(\tau)$. From $x(\infty)=X(0^+)$, we find $C=0$. Substituting $C=0$ into $y(\infty)=Y(0^+)$ to solve for $K$ we find $K=1$. Adding our inner and outer solutions together and subtracting the overlap ($x(\infty)=X(0^+)=0$ and $y(\infty)=Y(0^+)=\epsilon$)  we find
\begin{align*}
x(t)&=e^{-t}+O(\epsilon^2)\\
y(t)&=\epsilon(e^{\epsilon t}-e^{-t})+O(\epsilon^2),
\end{align*}
thus our total population size behaves as 
\begin{equation}
x(t)+y(t)=e^{-t}+\epsilon(e^{\epsilon t}-e^{-t})+O(\epsilon^2).\label{eq:totalpop}
\end{equation}

We can see from equation (\ref{eq:totalpop}) and Figure \ref{fig:asympapprox} that for very small $\epsilon$, the total population size behaves similarly to $e^{-t}$ before eventually growing. Thus for an minimum population threshold, we can always find an $\epsilon$ small enough, such that the solution crosses the threshold before the population grows. We can prove this by solving the full system and taking the limit of the minimum, though we leave this out as the calculations become rather messy.

\begin{figure}
\centering
\includegraphics[width=9cm]{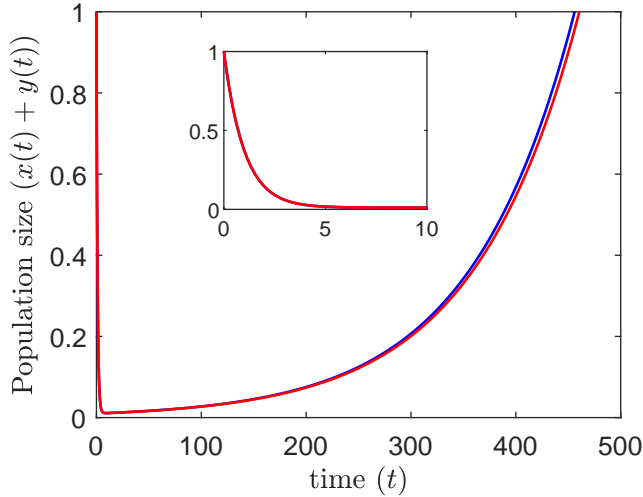}
\caption{Asymptotic approximation of the total population size (red) compared to the true total population size (blue) for system (\ref{eq:asymp}), with $\epsilon=0.01$. The asymptotic approximation is given by equation (\ref{eq:totalpop}).} 
\label{fig:asympapprox}
\end{figure}

Here we have shown that there are metapopulations for which the transient population can grow arbitrarily large or small, no matter the asymptotic stability of the system. In the next section we demonstrate how increasing the patch number can increase transient growth in advective metapopulations.

\section{Increasing patch number increases transient timescale}
\label{sec:largepatch}

In this section we examine the role that strong advection, coupled with a large number of patches, can have on the transient dynamics of metapopulations. In aquatic systems, habitat patches may be quite productive, but strong advection can sweep most larvae to the next patch, leading to large transient growth on downstream patches before the population eventually disappears from the last patch. This phenomenon can occur in metapopulations situated in rivers, ocean channels, or reef systems where reefs are arranged along a coastline with a directional current.


\begin{figure}
\centering
\begin{tikzpicture}
\node (x1) [stage2] {$x_1$};
\node (x2) [stage2, right of=x, xshift=1cm] {$x_2$};
\draw [arrow] (x1) to [out=30,in=150] node[above] {$b_2$} (x2);  
\draw [arrow] (x2) to [out=210,in=330] node[below] {$\epsilon$}(x1);
\draw [arrow] (x1) edge[loop below] node[below] {$b-d$} (x1);
\draw [arrow] (x2) edge[loop below] node[below] {$b-d$} (x2);
\node (xj-1) [draw=none,fill=none, right of=x2, xshift=0.5cm,minimum size=1cm] {$\dots$};
\node (xj) [stage2, right of=xj-1, xshift=0.5cm] {$x_j$};
\node (xj+1) [draw=none,fill=none, right of=xj, xshift=0.5cm,minimum size=1cm] {$\dots$};
\node (xn-1) [stage2, right of=xj+1, xshift=0.5cm] {$x_{n-1}$};
\node (xn) [stage2, right of=xn-1, xshift=0.5cm] {$x_n$};
\draw [arrow] (xn-1) to [out=30,in=150] node[above] {$b_2$} (xn);  
\draw [arrow] (xn) to [out=210,in=330] node[below] {$\epsilon$}(xn-1);
\draw [arrow] (xn-1) edge[loop below] node[below] {$b-d$} (xn-1);
\draw [arrow] (xn) edge[loop below] node[below] {$b-d$} (xn);
\draw [arrow] (xj) edge[loop below] node[below] {$b-d$} (xj);
\draw [arrow,dashed] (x2) to [out=30, in=150] node[above] {$b_2$} (xj-1);
\draw [arrow,dashed] (xj-1) to [out=210,in=330] node[below] {$\epsilon$}(x2);
\draw [arrow,dashed] (xj-1) to [out=30, in=150] node[above] {$b_2$} (xj);
\draw [arrow,dashed] (xj) to [out=210,in=330] node[below] {$\epsilon$}(xj-1);
\draw [arrow,dashed] (xj) to [out=30, in=150] node[above] {$b_2$} (xj+1);
\draw [arrow,dashed] (xj+1) to [out=210,in=330] node[below] {$\epsilon$}(xj);
\draw [arrow,dashed] (xj+1) to [out=30, in=150] node[above] {$b_2$} (xn-1);
\draw [arrow,dashed] (xn-1) to [out=210,in=330] node[below] {$\epsilon$}(xj+1);

\end{tikzpicture}
\caption{Digraph for system (\ref{eq:epspatch}). The directed edges represent the birth rate of individuals on the outgoing patch producing new individuals on the incoming patch. The self loops are the birth rate minus the death rate on a patch.}
\label{fig:longpatch}
\end{figure}
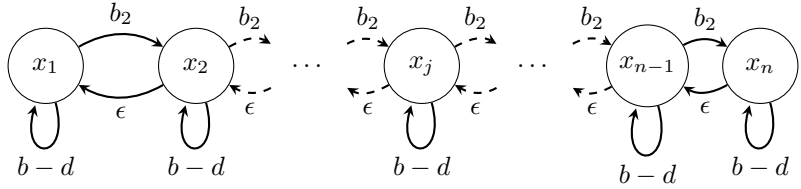

Consider a metapopulation on $n$ patches where the dynamics are described by the following system of equations:
\begin{align}
x'&=Ax \quad \label{eq:epspatch}\\
x(0)&=e_1\nonumber\\
A&=\begin{bmatrix}
b-d& \epsilon& 0 &\dots&0\\
b_2&b-d&\epsilon&\ddots&\vdots\\
0&b_2&b-d&\ddots&\\
\vdots& \ddots& \ddots&\ddots&\epsilon\\
0&\dots&0&b_2&b-d
\end{bmatrix},\nonumber
\end{align}
where $b$ is the on patch birth rate for each patch, $d$ is the death rate on each patch, $b_2$ is the birth rate of patch $j-1$ on patch $j$, $\epsilon$ is the birth rate from patch $j+1$ to patch $j$, and $e_1$ is a vector with 1 in the first entry and 0s elsewhere. The digraph for this system is shown in Figure \ref{fig:longpatch}.

The instantaneous measures of growth, $\lambda_j$, and the reactivity, $\bar{\sigma}$ in the $\ell_1$ norm, are therefore
\begin{align*}
\lambda_1&=b-d+b_2\\
\lambda_j&=b-d+b_2+\epsilon \quad j=2,\dots,n-1\\
\lambda_n&=b-d+\epsilon\\
\bar{\sigma}&=b-d+b_2+\epsilon.
\end{align*}
Let $b-d+\epsilon<0$ and $b-d+b_2>0$, then the system is reactive ($\bar{\sigma}>0$), and this maximum initial growth rate is achieved if the initial individual is on any patch except for patch 1 or $n$, though if the individual starts on patch 1 the initial growth rate is still positive. In system (\ref{eq:epspatch}) $A$ is a  tridiagonal Toeplitz matrix, so it has eigenvalues \cite{Noschese2013}
\begin{align*}
\lambda_h&=b-d+2\sqrt{b_2 \epsilon}\cos\left(\frac{h \pi}{n+1}\right)&h=1,\dots,n,
\end{align*}
and corresponding right eigenvectors, $v_h$, where the $k$th entry is given by 
\begin{align*}
v_{h,k}&=(b_2/\epsilon)^{k/2}\sin\left(\frac{hk\pi}{n+1}\right) &k=1,\dots,n;h=1,\dots,n.
\end{align*}

The solution to system (\ref{eq:epspatch}) can therefore be written as 
\begin{equation*}
x(t)=We^{Jt}W^{-1}e_1,
\end{equation*}
where $W$ is a matrix containing the eigenvectors, $v_h$, and $J$ is a diagonal matrix with the eigenvalues, $\lambda_h$, on the diagonal. For all but very small $t$, this is equivalent to the amplification envelope in the $\ell_1$ norm, $\rho (t)$, defined by equation  (\ref{eq:ampenv}).  Through examination of the eigenvalues, this system is stable if $\epsilon$ is small enough such that 
\begin{equation*}
b-d+2\sqrt{b_2\epsilon}<0.
\end{equation*}
Parameters that satisfy the inequalities that determine stability and reactivity in the $\ell_1$ norm can be found in the caption of  Figure \ref{fig:patches}. In this case the  maximum total population size, and also maximum amplification, are given by 
\begin{equation*}
x_{\max}=\rho_{\max}=\max_{t\geq 0}1^TWe^{Jt}W^{-1}e_1,
\end{equation*}
with the corresponding time $t_{\max}$, which is the value of $t$ for which the maximum occurs. The last measure of transience that is useful in this system is the total transient time, $t_{\rm total}$, which we define as the time it takes for the population size to decline below one, after initially starting with one individual, or

\begin{equation*}
t_{\rm total}=\min \{t>0:1^TWe^{Jt}W^{-1}e_1\leq 1\}.
\end{equation*}

So how does the number of patches affect the magnitude and length of transients? In Figure \ref{fig:patches}, which compares a 5 and 15 patch system, we can see that increasing the patch number increases both the magnitude of growth and the duration. 

\begin{figure}[!htb]
\centering
\subfloat[]{
\includegraphics[width=9cm]{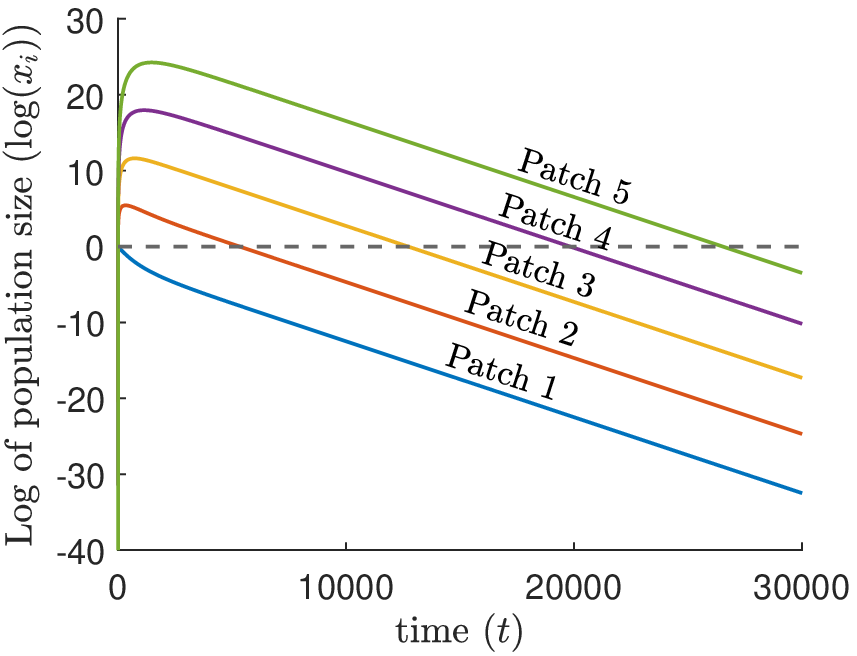}}

\subfloat[]{
\includegraphics[width=7.6cm]{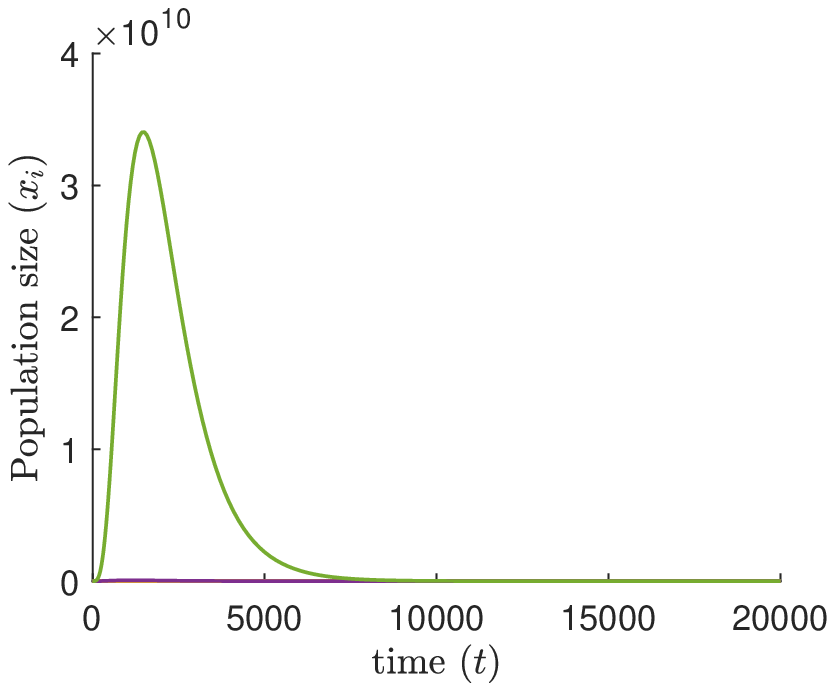}}
\subfloat[]{
\includegraphics[width=7.6cm]{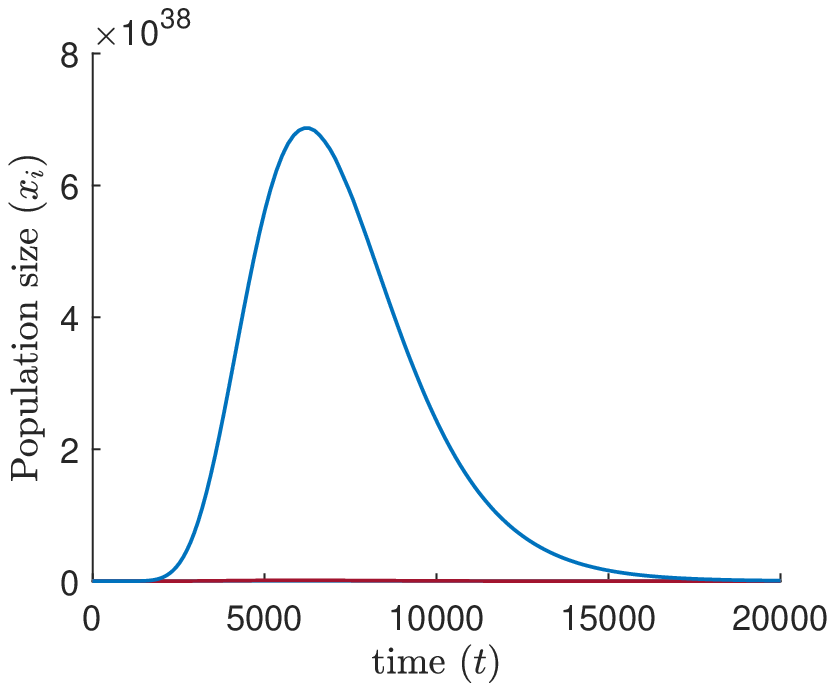}}

\caption{The population sizes on each patch for the advective system (\ref{eq:epspatch}), starting with one individual on patch 1. In a) the population sizes are shown on a log scale for a metapopulation of 5 patches, and in b) and c) the population sizes are shown for a metapopulation of 5 and 15 patches respectively. On the untransformed scale only the population size on the last patch can be seen as it is far larger than on any of the other patches, whereas on the log scale the population sizes of all patches can be seen. Parameters for this simulation are $b =0.09655$, $\epsilon =0.000001$, $b_2 =2$, and $d =0.1$, chosen so that system (\ref{eq:epspatch}) is reactive but stable.}
\label{fig:patches}
\end{figure}

Here it can be difficult to see the duration of transience exhibited by all patches on a regular scale, but on the log scale we can see that all patches except for patch 1 experience a large period of transient growth, before they decay below 1 individual (dashed line).  Patch 1 does not experience a large period of growth because the internal growth rate, $b-d$, is negative and the birth rate from patch 2 to patch 1, $\epsilon$, is too small to overcome this negative internal growth rate.

What cannot be seen from Figure \ref{fig:patches} is the dependence of transient growth on system parameters. We find that decreasing $b_2$ in system (\ref{eq:epspatch}) results in a large decrease in the maximum population size (and maximum amplification), $x_{\rm max}$ $(\rho_{max}$), and the total transient time, $t_{\rm total}$, but only a small decrease in the time at which the maximum population size is achieved, $t_{\max}$. Decreasing $b$ however, results in a large decrease in $x_{\rm max}$ ($\rho_{\max}$), $t_{\rm max}$, and $t_{\rm total}$. 

The relationship between increased transient time and number of patches can also be found for a linear metapopulation where all patches have negative initial growth rates, $\lambda_j$, except for the last patch which has a positive initial growth rate. In this case        the total population size decays for a long time before it eventually grows, and the time that it decays depends on the number of patches. 

We can see then that for a linear metapopulation, the length of the linear array can accentuate the transient growth that is possible in the system and that this is especially true for advective systems where there is some sort of directed birth in one direction in the array. Systems with this type of advective flow include  marine metapopulations located in channels near the mouth of rivers, or long coral reefs that are captured inside of a dominant coastal current flow. 

Having presented some illuminating metapopulation examples that demonstrate the magnitude that transient dynamics can differ from asymptotic dynamics, we now turn back to the general theory of transients in metapopulations and connect it to the source-sink distribution of habitat patches.

\section{Connecting the source-sink dynamics to the transient dynamics}

In this section we demonstrate how to connect the transient measures of initial population growth to the source-sink distribution of the metapopulation. For the transient measure of the patch specific contribution to the initial growth of the total population we use $\lambda_j$, previously defined in section \ref{sec:gentheory}. To measure the source-sink distribution we use the the classification derived from the next-generation matrix, $K$. 
This measure has previously been used to calculate the source-sink distribution in heterogenous environments \cite{Krkosek2010a, Mckenzie2012, Huang2016b, Harrington2020}. In order to calculate the next-generation matrix for system (\ref{eq:linear}) we decompose $A=F-V$, where $F$ is a non-negative matrix with positive entries that describe the birth of new individuals in the metapopulation, and $V$ is a non-singular $M$ matrix with entries that describe the transfer of individuals between compartments or in this case habitat patches, and also includes the death of individuals. Because $V$ is a non-singular $M$ matrix, $V^{-1}$ is non-negative. The next-generation matrix, $K=[k_{ij}]$, can then be calculated as $K=FV^{-1}$. This next generation matrix is then commonly used to calculate the basic reproduction number $R_0$, which is the average number of new individuals produced by one initial individual, as $R_0=\rho(K)$, where $\rho$ denotes the spectral radius. However we can also define $R_j$ as the number of new individuals produced on all patches from one initial individual starting on patch $j$, that can then be calculated as:
\[R_j=\sum_{i=1}^nk_{ij}.\] Using this measure we classify patch $j$ as a \textit{source} if $R_j>1$, as then one individual on patch $j$ would produce more than one individual in the total metapopulation. Likewise we classify patch $j$ as a \textit{sink} if $R_j<1$. 

\subsection{Expressing $R_0$ as a weighted sum of $R_j$}

Before examining the connection between the source-sink distribution $R_j$ and the initial growth $\lambda_j$, we first highlight a connection betwen $R_j$ and $R_0$. It turns out, as shown in the following Lemma, that $R_0$ can be calculated as a weighted sum of each $R_j$, and surprisingly this relationship between the spectral radius and the column sums of a matrix does not require any further assumptions on the matrix structure, though if the matrix is not non-negative, the components of the right eigenvector need not be real. Here $1$ is the (column) vector with each entry equal to 1, and $e_j$ is the vector with the only non-zero entry being 1 in the $j$th row.

\begin{lemma}
Let $v=[v_i]$ be the right eigenvector associated with the dominant eigenvalue of the next-generation matrix, $\mathcal{R}_0=\rho(K)$, normalized so $\sum_{1\leq i\leq n} v_i=1$. Then the basic reproduction number $\mathcal{R}_0=\sum_{1\leq j \leq n} R_jv_j$, where $R_j=1^T K e_j=\sum_{i=1}^n k_{ij}.$
\label{lemma:R0Ri}
\end{lemma}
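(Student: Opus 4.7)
The proof should be extremely short, essentially a one-line manipulation of the eigenvalue equation left-multiplied by $1^T$.

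The plan is to start from the defining eigenvalue relation $Kv = \mathcal{R}_0 v$ and take the $\ell_1$-sum of both sides by left-multiplying with the all-ones row vector $1^T$. On the right-hand side, this produces $\mathcal{R}_0 \cdot 1^T v$, which equals $\mathcal{R}_0$ by the normalization assumption $\sum_i v_i = 1$. On the left-hand side, I would split $v$ into its coordinates by writing $v = \sum_j v_j e_j$, so that
\begin{equation*}
1^T K v = \sum_{j=1}^n v_j \, 1^T K e_j = \sum_{j=1}^n v_j R_j,
\end{equation*}
using the definition $R_j = 1^T K e_j = \sum_i k_{ij}$ stated in the lemma. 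Equating the two sides gives the claimed identity.

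There is essentially no obstacle here: the result is a direct consequence of the fact that the column sums of $K$ are precisely the $R_j$, together with the eigenvector equation. I would flag, as the statement does, that no non-negativity or irreducibility hypothesis on $K$ is needed for the algebraic identity itself; the only subtlety is that in the fully general case $v$ (and hence the weights $v_j$) need not be real or sum-normalizable, but under the standing setup of the paper — where $K = FV^{-1}$ with $F \geq 0$ and $V$ a non-singular $M$-matrix — $K$ is non-negative, so by Perron--Frobenius the dominant eigenvalue $\mathcal{R}_0$ admits a non-negative eigenvector that can indeed be normalized so $\sum_i v_i = 1$, making the weights $v_j$ genuine convex coefficients and the identity an interpretable weighted average.
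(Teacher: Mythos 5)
Your proof is correct and takes essentially the same route as the paper: left-multiply the eigenvalue relation $Kv=\mathcal{R}_0 v$ by $1^T$, use the normalization $1^Tv=1$ on one side and the fact that the column sums of $K$ are the $R_j$ on the other. Your closing remark about non-negativity and normalizability of $v$ matches the caveat the paper itself makes just before stating the lemma.
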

\begin{proof}
First, we can rewrite $\mathcal{R}_0$ as
\begin{align*}
\mathcal{R}_0&=\mathcal{R}_0 1^T v\\
&=1^T \mathcal{R}_0 v,
\end{align*}
because the eigenvector has been normalized to sum to 1. Then, as $\mathcal{R}_0$ is the eigenvalue of $K$ associated with $v$ and the column sums of $K$ are $R_j$,

\begin{align*}
\mathcal{R}_0&=1^T \mathcal{R}_0 v\\
&=1^T Kv\\
&=\begin{bmatrix} R_1 & R_2 & \dots & R_n \end{bmatrix} v\\
&=\sum_{j=1}^n R_j v_j
\end{align*}
\end{proof}

The entries $v_j$ of the right eigenvector can be interpreted as the probability that a new individual begins on patch $j$ \cite{Cushing2016}. Therefore $\mathcal{R}_0$ can be interpreted as the sum over all patches, of the probability that an individual is born on patch $j$, multiplied by the number of new individuals it will produce on all other patches over its lifetime. 

Similarly, if we define $
\lambda_0$ to be the dominant eigenvalue of $A$, with the associated normalized eigenvector $u$, then 

\begin{align*}
\lambda_0&=1^TAu\\
&=\sum_{j=1}^n \sum_{i=1}^n a_{ij} u_j\\
&=\sum_{j=1}^n \lambda_j u_j,
\end{align*}
where it should be noted that $\lambda_j$ is the $j$th column sum of $A$, rather than an eigenvalue of $A$.

\subsection{Connecting the source-sink distribution, $R_j$, to the initial growth rate, $\lambda_j$}

Now that we have decomposed the dominant eigenvalues, $R_0$ and $\lambda_0$, into weighted sums of the columns of $K$ and $A$ respectively, we proceed to connect the source-sink distribution of a particular patch, $R_j$, to the initial growth from an individual on that patch, $\lambda_j$. To do so there are some restrictions that we need to impose on our metapopulation system and this is where we limit our study to marine or birth-jump metapopulation models where juveniles or seeds can disperse between patches while adults remain confined to habitat patches. The mathematical restriction defined by this class of models comes from the decomposition of $A$ into $F-V$. Here $V$ contains all entries that describe the transfer of individuals between compartments or patches. For the results presented in this paper, we require that $V$ has the following reducible form:
\begin{equation*}
V=\begin{bmatrix}
V_{11} & 0\\
V_{21} & D
\end{bmatrix},
\end{equation*}
where $V_{11}$ is $k\times k$, $D=\diag(d_{k+1},\dots,d_n)$ with $d_{k+1},\dots,d_n$ all positive, $0\leq k\leq n-1$, and $V$ is a non-singular $M$ matrix.

With this structure, individuals on patches $j=k+1,\dots,n$ cannot migrate between patches, but can still give birth to new individuals on any patch. If $V$ is completely diagonal, then there is no migration between any patches, only birth on other patches. This is the case for models of plants with seed dispersal, or simplified marine metapopulation models if the juvenile stage is not explicitly modelled. Under this structure, we first present a proofs connecting our instantaneous and generational growth measures, $\lambda_j$ and $R_j$, before presenting a two patch example.

\begin{lemma}
Let $A=F-V$ for system (\ref{eq:linear}), where $F$ is non-negative, and $V$ is a non-singular $M$ matrix with the following form: 
\begin{equation*}
V=\begin{bmatrix}
V_{11} & 0\\
V_{21} & D
\end{bmatrix},
\end{equation*}
where $V_{11}$ is $k\times k$, $D=\diag(d_{k+1},\dots,d_n)$ with $d_{k+1},\dots,d_n$ all positive, and $0\leq k\leq n-1$.  For $k+1\leq j\leq n$, $\lambda_j$ is positive if and only if $R_j>1$.
\label{thm:Riyi}
\end{lemma}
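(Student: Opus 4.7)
The plan is to exploit the block-lower-triangular structure of $V$, which forces its columns $j \geq k+1$ to be very simple. Specifically, for $j \geq k+1$, the $j$th column of $V$ has zeros in the top $k$ rows (because of the zero block) and the bottom block is the $(j-k)$th column of the diagonal matrix $D$, which is $d_j e_{j-k}$. So as a vector in $\mathbb{R}^n$, $V e_j = d_j e_j$, i.e., $e_j$ is an eigenvector of $V$ with eigenvalue $d_j > 0$. Consequently $V^{-1} e_j = \frac{1}{d_j} e_j$.

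With that observation in hand the rest is a direct computation. First I would expand
\begin{equation*}
\lambda_j = 1^T A e_j = 1^T F e_j - 1^T V e_j = 1^T F e_j - d_j,
\end{equation*}
using $A = F - V$ and $V e_j = d_j e_j$. Next I would compute
\begin{equation*}
R_j = 1^T K e_j = 1^T F V^{-1} e_j = \frac{1}{d_j}\, 1^T F e_j.
\end{equation*}
Dividing the first identity by $d_j > 0$ gives $\lambda_j/d_j = R_j - 1$, which immediately yields the equivalence $\lambda_j > 0 \iff R_j > 1$.

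There isn't really a hard step here — the whole proof hinges on noticing that the hypothesized form of $V$ makes the last $n-k$ standard basis vectors eigenvectors of $V$ (and hence of $V^{-1}$). Once this is observed, both $\lambda_j$ and $R_j$ reduce to simple linear expressions in the single scalar $1^T F e_j$, and the equivalence drops out with the common positive factor $d_j$. I would emphasize in the write-up why the restriction $j \geq k+1$ is essential: for $j \leq k$ the column $V e_j$ can have off-diagonal contributions from $V_{11}$ and $V_{21}$, so the clean relation $V e_j = d_j e_j$ fails and one should not expect the same sign equivalence between $\lambda_j$ and $R_j - 1$.
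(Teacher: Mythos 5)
Your proof is correct and follows essentially the same route as the paper's: both hinge on the observation that the block structure forces $Ve_j=d_je_j$ for $j\geq k+1$, and then reduce $\lambda_j$ and $R_j$ to the common quantity $1^TFe_j$ (the paper does this by inserting $V^{-1}V$ to get $\lambda_j=(R_j-1)d_j$, which is algebraically identical to your two identities). Your explicit justification of why the restriction $j\geq k+1$ is needed is a nice touch but does not change the substance.
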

\begin{proof}
First, we can write $\lambda_j$ as
\begin{align*}
\lambda_j&=\sum_{i=1}^n a_{ij}\\
&=1^TAe_j.
\end{align*}

Then decomposing $A$ into $F-V$, and inserting $V^{-1}V$
\begin{align*}
\lambda_j&=1^T(F-V) e_j\\
&=1^T(F-V)V^{-1}V e_j\\ 
&=1^T(FV^{-1}-I)Ve_j.
\end{align*}

For $k+1\leq j \leq n$, $V$ is diagonal, so \[Ve_j=d_je_j.\] Therefore   
\begin{align*}
\lambda_j&=1^T(FV^{-1}-I)d_je_j\\
&=(R_j-1)d_j.
\end{align*} 
Now $d_j>0$, and thus $\lambda_j>0$ if and only if $R_j>1$.
\end{proof}

\begin{corollary}
\label{cor:diag}
In the notation of Lemma \ref{thm:Riyi}, if $V$ is diagonal, then $\lambda_j>0$ if and only if $R_j>1$ for $j=1,\dots,n$.
\end{corollary}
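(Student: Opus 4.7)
The plan is to deduce the corollary as a direct specialization of Lemma \ref{thm:Riyi}, by taking $k=0$. When $k=0$ the block $V_{11}$ is an empty $0\times 0$ matrix and the off-diagonal block $V_{21}$ is also absent, so the prescribed block form of $V$ collapses to $V=D=\diag(d_1,\dots,d_n)$, which is precisely the hypothesis that $V$ is diagonal. The non-singular $M$-matrix condition then forces each $d_j>0$ (since a diagonal $M$-matrix has positive diagonal entries), matching the positivity hypothesis on the diagonal of $D$ that Lemma \ref{thm:Riyi} requires.

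With $k=0$, the index range $k+1\le j\le n$ appearing in Lemma \ref{thm:Riyi} becomes $1\le j\le n$, i.e.\ the full index set. Therefore the conclusion of the lemma, $\lambda_j>0$ if and only if $R_j>1$, holds for every $j\in\{1,\dots,n\}$, which is exactly the statement of the corollary.

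I do not anticipate any real obstacle here: the corollary is a degenerate-case reading of the lemma, and the only thing to verify is that the hypotheses still match when $k=0$. The one subtlety worth flagging in the write-up is that the Lemma's proof used $Ve_j=d_je_j$ only for $j\ge k+1$ because of the block structure; in the diagonal case this identity holds for all $j$, so the same chain of equalities $\lambda_j=1^T(FV^{-1}-I)Ve_j=(R_j-1)d_j$ goes through for every column index. Hence the biconditional $\lambda_j>0\iff R_j>1$ extends to all $j$, completing the proof.
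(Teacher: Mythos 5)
Your proposal is correct and is exactly the argument the paper intends: the corollary is the $k=0$ specialization of Lemma \ref{thm:Riyi}, under which $V=D$ is diagonal with positive entries and the index range $k+1\le j\le n$ becomes the full set $1\le j\le n$. The paper states the corollary without proof, so your explicit check that $Ve_j=d_je_j$ now holds for every $j$ (hence $\lambda_j=(R_j-1)d_j$ for all $j$) is a welcome, if routine, verification.
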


\begin{corollary}
\label{lem:sigma}
Under the same conditions as Lemma \ref{thm:Riyi}, in the $\ell_1$ norm $\bar{\sigma}>0$ if \\ $\max_{k+1\leq j\leq n}R_j>1$, and $\underline{\sigma}<0$ if $\min_{k+1\leq i\leq n}R_i<1$.
\end{corollary}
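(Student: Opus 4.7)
The plan is to combine Lemma \ref{thm:Riyi} directly with Lemma \ref{lemma:reactive} (and its attenuation counterpart) — the corollary is essentially a one-line consequence once both are in hand. Since the hypotheses of Lemma \ref{thm:Riyi} are in force, for every index $j$ with $k+1 \leq j \leq n$ we have the identity $\lambda_j = (R_j - 1)d_j$ with $d_j > 0$. So the sign of $\lambda_j$ on this index range is controlled by whether $R_j$ exceeds or falls below $1$.

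For the first assertion, I would pick an index $j^\star \in \{k+1,\dots,n\}$ realizing $R_{j^\star} = \max_{k+1 \leq j \leq n} R_j > 1$. Then $\lambda_{j^\star} = (R_{j^\star}-1)d_{j^\star} > 0$, and since by Lemma \ref{lemma:reactive} we have $\bar{\sigma} = \max_{1 \leq j \leq n} \lambda_j$ (maximum over \emph{all} patches, including those in the $V_{11}$ block), it follows that $\bar{\sigma} \geq \lambda_{j^\star} > 0$.

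The argument for attenuation is symmetric: pick $j^\star \in \{k+1,\dots,n\}$ realizing $R_{j^\star} = \min_{k+1 \leq i \leq n} R_i < 1$, so that $\lambda_{j^\star} = (R_{j^\star}-1)d_{j^\star} < 0$. Applying the attenuation analogue of Lemma \ref{lemma:reactive}, namely $\underline{\sigma} = \min_{1 \leq j \leq n} \lambda_j$, gives $\underline{\sigma} \leq \lambda_{j^\star} < 0$.

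There is no real obstacle here — the corollary is purely a bookkeeping step that lifts the patch-level sign information furnished by Lemma \ref{thm:Riyi} to the metapopulation-level extrema defined earlier. The one thing worth flagging is that the conditions are one-sided (sufficient, not necessary): because the $\max$ and $\min$ defining $\bar{\sigma}$ and $\underline{\sigma}$ range over all $n$ patches while Lemma \ref{thm:Riyi} only relates $\lambda_j$ to $R_j$ for $j > k$, patches in the $V_{11}$ block could in principle drive reactivity or attenuation on their own. The corollary statement correctly reflects this by using "if" rather than "if and only if," and my proof respects that by only needing the inequality in the direction that the max/min provides.
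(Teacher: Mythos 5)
Your argument is correct and is essentially identical to the paper's own proof: both use the identity $\lambda_j=(R_j-1)d_j$ from Lemma \ref{thm:Riyi} for $k+1\leq j\leq n$ and then pass to $\bar{\sigma}=\max_{1\leq j\leq n}\lambda_j$ and $\underline{\sigma}=\min_{1\leq j\leq n}\lambda_j$. Your closing remark on why the implications are only one-sided is also consistent with the paper, which reserves the ``if and only if'' version for the case of diagonal $V$ in Corollary \ref{cor:max}.
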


\begin{proof}
Under the conditions in Lemma \ref{thm:Riyi}, we know that $R_j-1$ has the same sign as $\lambda_j$ for $k+1\leq j\leq n$. Therefore if $\max_{k+1\leq j\leq n} R_j>1$ then $\bar{\sigma}=\max_{1\leq j\leq n} \lambda_j>0$, i.e. the system is reactive. Similarly if $\min_{k+1\leq j\leq n} R_j<1$ then $\underline{\sigma}=\min_{1\leq j\leq n} \lambda_j<0$, i.e. the population attenuates.
\end{proof}

\begin{corollary}
\label{cor:max}
Under the same conditions of Lemma \ref{thm:Riyi}, only with $V$ diagonal, then in the $\ell_1$ norm $\bar{\sigma}>0$ if and only if $\max_{1\leq j\leq n}R_j>1$ and $\underline{\sigma}<0$ if and only if $\min_{1\leq j\leq n}R_j$.
\end{corollary}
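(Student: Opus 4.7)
The plan is to upgrade Corollary \ref{lem:sigma} by combining Corollary \ref{cor:diag} with Lemma \ref{lemma:reactive} (and its attenuation analogue). The point of the stronger hypothesis here is that when $V$ is fully diagonal rather than merely block-lower-triangular, the equivalence $\lambda_j > 0 \Leftrightarrow R_j > 1$ holds for \emph{every} patch $j \in \{1,\dots,n\}$, not only for the indices $k+1 \leq j \leq n$. This patchwise equivalence for all $j$ is precisely what is needed to convert the one-sided implications of Corollary \ref{lem:sigma} into biconditionals.

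First, I would recall Lemma \ref{lemma:reactive}, which gives $\bar{\sigma} = \max_{1\leq j\leq n}\lambda_j$ in the $\ell_1$ norm, together with the analogous identity $\underline{\sigma} = \min_{1\leq j \leq n}\lambda_j$ used implicitly in Corollary \ref{lem:sigma}. Next, I would apply Corollary \ref{cor:diag}: under the assumption that $V$ is diagonal, $\lambda_j$ and $R_j - 1$ carry the same sign for each $j = 1,\dots,n$. Combining these two facts yields
\begin{equation*}
\bar{\sigma} > 0 \;\Longleftrightarrow\; \max_{1\leq j \leq n} \lambda_j > 0 \;\Longleftrightarrow\; \max_{1\leq j \leq n} R_j > 1,
\end{equation*}
and the symmetric chain of equivalences for $\underline{\sigma} < 0$ against $\min_j R_j < 1$. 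Only a short sentence is required to justify passing the maximum (or minimum) through the sign comparison, since the set of indices is finite and the equivalence is pointwise.

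The only potential obstacle is clarifying \emph{why} the full diagonality of $V$ is essential, as opposed to the weaker block structure in Lemma \ref{thm:Riyi}. The reason is that when $V_{11}$ is not diagonal, the identity $\lambda_j = (R_j - 1)d_j$ derived in the proof of Lemma \ref{thm:Riyi} fails for indices $j \leq k$, because $Ve_j \neq d_j e_j$ on that block; one loses the converse implication there, so $\max_j R_j$ could exceed $1$ purely through the block indices without $\bar{\sigma}$ being positive, or vice versa. Under the diagonal hypothesis no such gap exists, so the biconditionals hold. I would also note in passing the apparent typo: the statement should read $\min_{1\leq j\leq n} R_j < 1$ in the second equivalence.
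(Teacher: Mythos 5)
Your proposal is correct and follows essentially the same route as the paper: combine Corollary \ref{cor:diag} (the patchwise equivalence $\lambda_j>0\Leftrightarrow R_j>1$ for all $j$ when $V$ is diagonal) with $\bar{\sigma}=\max_j\lambda_j$ and $\underline{\sigma}=\min_j\lambda_j$ to obtain the biconditionals. Your observation that the statement should read $\min_{1\leq j\leq n}R_j<1$ is also correct.
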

\begin{proof}
From Corollary \ref{cor:diag}, $\lambda_j>0$ if and only if $R_j>1$ for each patch $j$. Therefore if $\bar{\sigma}=\max_{1\leq j\leq n}\lambda_j>0$, then $\max_{1\leq j\leq n}R_j>1$, and likewise if $\max_{1\leq j\leq n}R_j>1$, then $\bar{\sigma}>0$. The same argument holds for $\min_{1\leq j\leq n} \lambda_j$ and $\min_{1\leq j\leq n}R_j$.
\end{proof}

Now that we have presented theory connecting the initial growth rate, $\lambda_j$, and the source-sink distribution, $R_j$, we present an example to illustrate how to calculate these measures and how  Lemma \ref{thm:Riyi} and the following corollaries can be used to connect them.

\subsection*{Example 3}\label{sec:2patch}

Here we present an example of a metapopulation consisting of two habitat patches, patch 1 and patch 2. New individuals can be born on either patch, but no individuals can migrate between patches. This system represents a simplification of the adult dynamics of many marine meroplanktonic metapopulations, where dispersal between patches occurs at the larval stage, rather than the sedentary adult stage. This system could also represent plant metapopulations that spread through seed dispersal, if the habitat landscape is patchy. The metapopulation dynamics can be represented with the following set of ODEs:
\begin{align}
x_1'&=b_{11}x_1+b_{12}x_2-d_1x_1\label{eq:patch1}\\
x_2'&=b_{21}x_1+b_{22}x_2-d_2x_2\nonumber,
\end{align}
where $b_{ij}$ is the birth rate for births from patch $j$ to patch $i$, and $d_i$ is the death rate on patch $i$. The lifecycle graph for this system is shown in Figure \ref{fig:patch1}. 

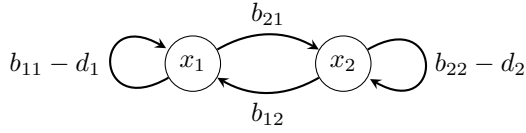
\begin{figure}
\centering
\begin{tikzpicture}
\node (x1) [stage] {$x_1$};
\node (x2) [stage, right of=x, xshift=1cm] {$x_2$};
\draw [arrow] (x1) to [out=30,in=150] node[above] {$b_{21}$} (x2);  
\draw [arrow] (x2) to [out=210,in=330] node[below] {$b_{12}$}(x1);
\draw [arrow] (x1) edge[out=210,in=150,loop] node[left] {$b_{11}-d_1$} (x1);
\draw [arrow] (x2) edge[out=30,in =330,loop] node[right] {$b_{22}-d_2$} (x2);
\end{tikzpicture}
\caption{Digraph for system (\ref{eq:patch1}). The directed edges represent the birth rate of individuals on the outgoing patch producing new individuals on the incoming patch. The self loops are the birth rate minus the death rate on a patch.}
\label{fig:patch1}
\end{figure}

We then decompose $A=F-V$ and construct the next generation matrix, $K=FV^{-1}$:
\begin{align*}
A&=\begin{bmatrix}
b_{11}-d_1 & b_{12}\\
b_{21} & b_{22}-d_2
\end{bmatrix}, \quad
F=\begin{bmatrix}
b_{11} & b_{12}\\
b_{21} & b_{22}
\end{bmatrix},\quad
V=\begin{bmatrix}
d_1 & 0\\
0 & d_2
\end{bmatrix},\\
K&=FV^{-1}=\begin{bmatrix}
b_{11}/d_1 & b_{12}/d_2\\
b_{21}/d_1 & b_{22}/d_2
\end{bmatrix}.
\end{align*}
For an initial individual starting on patch 1, the expected lifetime is $1/d_1$, and the rate that the individual is producing new individuals on both patches is $b_{11}+b_{21}$. Therefore $R_1=(b_{11}+b_{21})/d_1$ is the total number of individuals born onto both patch 1 and patch 2 over one generation. It is clear that $\lambda_1=b_{11}+b_{21}-d_1>0$ if and only if $R_1=\frac{b_{11}}{d_1}+\frac{b_{21}}{d_1}>1$, in accordance with Corollary \ref{cor:diag}.  Similarly $\lambda_2=b_{12}+b_{22}-d_2>0$ if and only if $R_2=\frac{b_{12}}{d_2}+\frac{b_{22}}{d_2}>1$. The system is therefore reactive if $\max(R_1, R_2)>1$ (Corollary \ref{cor:max}).

At first glance it seems obvious that if an individual starts on a source patch the population should have a positive initial growth rate or if the population starts on a sink patch it should have a negative initial growth rate, and we have shown from  Corollary \ref{cor:diag} and Example 3 
 that this is indeed the case for marine metapopulations. What is perhaps surprising is that this is not the case for general metapopulations when adults can migrate between habitat patches, and thus when the conditions of  Lemma \ref{thm:Riyi} and Corollary  \ref{cor:diag} are not met.  In the general case it is possible to start with an individual on a source patch, but for the population to initially decline and likewise to start on a sink patch but for the population to initially grow. An example of such a metapopulation is shown in Appendix \ref{appendix:a}.

Here in this section we have shown that for marine metapopulations and other metapopulations where the population dynamics are defined by birth jump processes, there is a one-to-one relationship between the source-sink distribution of patches and the initial growth rate when starting with one adult on a patch. That is, the initial population growth rate is positive if we start with one adult on patch $j$ if and only if patch $j$ is a source, and the initial growth rate is negative if and only if patch $j$ is a sink. This is a useful relationship biologically as there are several marine metapopulations where patches have already been classified into sources and sinks, and thus the transient dynamics for these systems can now be better understood. 

\section{Stage structure}

In this section we add stage structure to demonstrate some of the nuances in analysing transients in stage structured metapopulation models. The main issue with analysing reactivity and attenuation in models with stage structure is due to the fact that adults often give birth to many more juveniles than will survive to become adults, and that juveniles cannot normally give birth to new juveniles. This presents a few complications.

The first complication is the fact that if we want to analyse the initial growth or decay of a population, starting with an individual in a patch, it now depends if the individual is a juvenile or an adult. If we start with a juvenile, then there is no way that the total population, or even the patch population, can grow, given that the juvenile has to first survive to the adult stage to give birth to new juveniles. Thus we want to start with one adult on a patch.

However, if we start with an adult in a patch, and it gives birth to new juveniles, how do we count these new juveniles? If we are considering a marine metapopulation do we count every larvae as a new individual? If so, every marine metapopulation would exhibit transience, as each adult often produces thousands of larvae. This then begs the question: in a stage structured metapopulation, can we scale the juvenile population so that transient measures of population growth, such as reactivity and attenuation, are useful for stage structured models and measure the biologically relevant quantities?

To motivate the necessity of an honest scaling we highlight a discrete time example of transients in Dungeness crabs from Caswell and Neubert \cite{Caswell2005}. Dungeness crabs give birth to an enormous number of larvae, many of which do not survive to settle and become juveniles after 1 year. In this case the discrete time model requires a census time to measure new crabs after 1 year. If the census is taken pre breeding, then the system exhibits little reactivity, as many of the larvae that where initially born have not survived to become 1 year old juveniles. However, it the census is taken post breeding, then all of the eggs or larvae are counted and the initial amplification is increased by $10^5$. The models considered in this paper are continuous time and do not face this exact problem, but it is easy to see that the addition of a larvae stage in a marine stage structured metapopulation has large effects on the reactivity of the system.

Returning then to our stage structured model with only juvenile and adult stages, how should the juvenile stage be scaled so that an initial growth in juveniles also corresponds in some sense to growth in the total population? Ideally, we would scale the juvenile population so that each juvenile is scaled by the probability that it will become an adult. If we scale our population in this way then the measures of reactivity and attenuation regain their original meaning. If the maximum initial growth rate of our population, now scaled to be in terms of adults, is positive then our system is reactive, and if the minimum is negative then it attenuates. 

A biologically relevant measure of reactivity in a stage structured model must then be focused on the initial growth rate of the population, calculated so that the growth rates of juveniles are scaled by their contribution to the adult population. Under this scaling if any adult on any habitat patch produces many juveniles, but less than one become viable adults, then such a metapopulation is not reactive. Whereas if there is a patch such that one adult produces many juveniles and more than one survive to adulthood then the metapopulation is reactive, because the stage structured population, where juveniles are scaled according to their contribution to the adult population, is growing.

In the following sections we formally define such a scaling using a weighted $\ell_1$ norm, and contrast it with the unweighted $\ell_1$ norm that we have previously been using to calculated reactivity in metapopulations without stage structure.

\subsection{Unweighted $\ell^1$}

We first present the unweighted $\ell^1$ measure of the initial growth rate to demonstrate the mathematical framework that we use to examine reactivity in a stage-structured population with juveniles and adults. Now that there are both juveniles and adults we want to measure reactivity and attenuation as the total initial growth rate of the population, measured using either the weighted or unweighted norm, when we start with one adult on a patch. 

Consider a population on $n$ patches with a juvenile and adult stage. Let the population dynamics be described by 
\begin{equation}x'=Ax,\label{eq:patch} 
\end{equation} where $A$ is a $2n\times 2n$ matrix. Arrange $A$ so that all ODEs describing the change in the adult populations are in rows $n+1$ to $2n$. Assume that $A=F-V$, where $F$ is non-negative and $V$ is a non-singular M matrix so that $V^{-1}$ is non-negative. We are interested in metapopulations where adults can give birth to juveniles, but juveniles cannot give birth to new juveniles, so $F$ and $V$ can be written in block form as follows:
\begin{equation}
F=\begin{bmatrix}
0&F_{12}\\
0&0
\end{bmatrix}, \quad
V=\begin{bmatrix}
V_{11} & 0\\
V_{21} & V_{22}
\end{bmatrix}.\label{eq:vblock}
\end{equation}
With this decomposition $F_{12}$ contains all the new juvenile births from each adult patch, $V_{11}$ is a diagonal matrix that contains the rates of juvenile mortality on each patch as well as the maturation from juveniles to adults, $V_{22}$ is a diagonal matrix that contains the rates of adult mortality on each patch, and $V_{21}$ contains the negative of the rates of maturation/migration from juveniles to adults.

We define $\tilde{\lambda}_j$ to be the initial population growth rate, starting with one adult on patch $j$, measured using the $\ell_1$ norm. This can be defined mathematically for $1\leq j \leq n$ as 
\begin{equation*}
\tilde{\lambda}_j=\underbrace{\sum_{i=1}^nx_i'(0)}_\text{juvenile}+\underbrace{\sum_{i=n+1}^{2n}x_i'(0)}_\text{adult}, \quad x(0)=e_{j+n}.
\end{equation*}
In terms of $F$ and $V$ 
\begin{equation*}
\tilde{\lambda}_j=\sum_{i=1}^n f_{12ij}-\sum_{i=1}^n v_{22ij},
\end{equation*}
where $f_{12ij}$ and $v_{22ij}$ are the $(i,j)$ entries of $F_{12}$ and $V_{22}$ respectively. 

We use the tilde to differentiate the initial growth rate in the stage structured population, where we specifically begin with one adult on a patch, from the initial growth rate in a population without stage structure, where there is no difference in the type of individual that we start with. Having presented the mathematical framework that we use to measure reactivity in a stage structured population using the unweighted $\ell_1$ norm, we now use a weighted $\ell_1$ norm that  better captures the biological meaning of reactivity.

\subsection{Weighted $\ell^1$ for each patch}
\label{sec:weight}

In order to measure reactivity in a biologically meaningful fashion, we introduce a new measure of the initial population growth rate, $\tilde{\lambda}_j^p$. This initial population growth rate is calculated using a weighted $\ell_1$ norm so that the adult population is still measured using the regular $\ell_1$ norm, but the juvenile population on each patch is scaled by the probability that the juveniles survive to adulthood; the patch specific nature of the weighing is why we denote the initial growth rate $\tilde{\lambda}_j^p$. In this fashion $\tilde{\lambda}_j^p$ measures the initial growth rate of the total population if every member of the metapopulation was weighted according to how much they will contribute to the adult population. Adults are therefore not weighted, and juveniles are weighted by the probability that they survive to adulthood. This weighting recaptures the biological meaning of reactivity, where a system will only be reactive if the adult population will grow, and a system will not be reactive if there is only an initial spike in the juvenile population.

We use the same framework as the previous section to mathematically calculate $\tilde{\lambda}_j^p$, where we decompose $A=F-V$ and $F$ and $V$ are shown in block form in equation (\ref{eq:vblock}). Then we weight the juvenile population growth on each patch $i$ by a factor $s_i$, where $s_i$ is the probability that a juvenile from patch $i$ eventually becomes an adult. From the block form $V$, $s_i$ can be calculated as 
\begin{equation}
s_i=\sum_{k=1}^n(-V_{21}V_{11}^{-1})_{ki}.\label{eq:sj}
\end{equation}
To see how this corresponds to the probability of survival of a juvenile on patch $i$, consider the different block components of $F$ and $V$. The matrix $V_{11}^{-1}$ is diagonal, with the $(j,j)$ entry representing the average residence time of a juvenile born onto patch $j$. The matrix $-V_{21}$ contains the rates of maturation/migration of juveniles becoming adults on different patches, so the $(i,j)$ entry is the rate of maturation/migration of a juvenile on patch $j$ becoming an adult on patch $i$. This means that when we multiply $-V_{21}$ by $V_{11}^{-1}$ we are multiplying each of these rates by the residence times of the juveniles in the appropriate patches. In this way, the $(i,j)$ entry of  $-V_{21}V_{11}^{-1}$  is then  the probability that a juvenile leaving patch $j$ arrives on patch $i$. Therefore the $j$th column sum of $-V_{21}V_{11}^{-1}$ is the probability that a juvenile starting on patch $j$ becomes an adult on any other patch.

The initial growth rate using the weighted norm, $\tilde{\lambda}_j^p$, can then be calculated as the sum of the juvenile growth rates, each multiplied by the patch specific survival $s_i$, and the adult growth rates. Mathematically, this is defined as:
\begin{align*}\tilde{\lambda}_j^p&=\underbrace{\sum_{i=1}^ns_ix_i'(0)}_\text{juvenile}+\underbrace{\sum_{i=n+1}^{2n}x_i'(0)}_\text{adult}, \quad x(0)=e_{j+n}\\
&=\sum_{i=1}^ns_if_{12ij}-\sum_{i=1}^nv_{22ij}.
\end{align*}

In order to demonstrate that the initial growth rate calculated using the weighted $\ell_1$ norm, $\tilde{\lambda}_j^p$, indeed measures the growth rate of the population if all individuals are weighted according to their contribution to the adult population, we show that $\tilde{\lambda}_j^p$ is equivalent to scaling the juvenile population on each patch by the probability of survival to adulthood, and then measuring the initial growth rate using the unweighted $\ell_1$ norm, defined previously as $\tilde{\lambda}_j$.

\begin{lemma}
If each juvenile population on patch $j$ in system (\ref{eq:patch}) is rescaled by the patch specific survival probability, $s_i=\sum_{k=1}^n(-V_{21}V_{11}^{-1})_{ki}$, then the initial growth rate using the unweighted $\ell^1$ norm, $\tilde{\lambda}_j$, is equal to the patch specific weighted initial growth rate, $\tilde{\lambda}_j^p$ for the unscaled system. 
\end{lemma}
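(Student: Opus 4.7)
The plan is to introduce the rescaling explicitly as a change of variables and then compute the column sum directly. Let $S=\diag(s_1,\dots,s_n)$ and define $y=\diag(S,I_n)\,x$, so the juveniles are reweighted by their survival probabilities while the adults are left alone. Because adults are unchanged, the initial condition ``one adult on patch $j$'', namely $x(0)=e_{n+j}$, is preserved as $y(0)=e_{n+j}$. The transformed dynamics are $y'=\tilde{A}y$ with
\[
\tilde{A}=\diag(S,I_n)\,A\,\diag(S^{-1},I_n).
\]
Since $s_i\ge 0$ and $x_i\ge 0$, the vector $y$ is also non-negative, so Lemma \ref{lemma:reactive} applies to the rescaled system and $\tilde{\lambda}_j$ for the rescaled system equals the $(n+j)$th column sum of $\tilde{A}$, i.e.\ $\mathbf{1}^T\tilde{A}e_{n+j}$.

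Next I would compute that column sum directly using the block structure in~(\ref{eq:vblock}). The factor $\diag(S^{-1},I_n)e_{n+j}=e_{n+j}$ since $e_{n+j}$ lies in the adult block, so I only need $\mathbf{1}^T\diag(S,I_n)Ae_{n+j}$. Reading off the $(n+j)$th column of $A=F-V$: the juvenile block is the $j$th column of $F_{12}$, and the adult block is the $j$th column of $-V_{22}$, which, because $V_{22}$ is diagonal, equals $-v_{22jj}e_j^{(n)}$. Left-multiplying by $\diag(S,I_n)$ scales only the juvenile entries, producing $(s_i f_{12ij})_i$ in the juvenile block and leaving $-v_{22jj}e_j^{(n)}$ in the adult block. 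Summing all $2n$ entries gives $\sum_{i=1}^n s_i f_{12ij}-v_{22jj}$, and using diagonality of $V_{22}$ to rewrite $v_{22jj}=\sum_{i=1}^n v_{22ij}$, this is exactly the definition of $\tilde{\lambda}_j^p$.

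The main obstacle is really just careful bookkeeping: keeping the block indices straight (rows/columns $1,\dots,n$ for juveniles, $n+1,\dots,2n$ for adults), checking that the rescaling is well-defined (which requires $s_i>0$, guaranteed because $-V_{21}V_{11}^{-1}$ is non-negative and the metapopulation is connected enough that juveniles on every patch have some route to adulthood), and verifying that non-negativity is preserved so that Lemma \ref{lemma:reactive} can be invoked in the rescaled coordinates. Beyond that, everything reduces to a one-line matrix manipulation, and the content of the lemma is essentially that conjugating $A$ by $\diag(S,I_n)$ scales the relevant column of $F_{12}$ by the appropriate $s_i$ before summing while leaving the $V_{22}$ column untouched.
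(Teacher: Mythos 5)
Your proof is correct and takes essentially the same route as the paper: both introduce the change of variables $x^*=\diag(S,I_n)x$, conjugate $A$ so that $F_{12}$ becomes $SF_{12}$ while the $V_{22}$ block is unchanged, and read off the $(n+j)$th column sum to recover $\sum_i s_i f_{12ij}-\sum_i v_{22ij}=\tilde{\lambda}_j^p$. The only cosmetic difference is that you invoke Lemma \ref{lemma:reactive} where the definition of $\tilde{\lambda}_j$ as $1^T\tilde{A}e_{n+j}$ already suffices, and your remarks on $s_i>0$ and preservation of non-negativity are harmless extras.
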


\begin{proof}
Rescale the juvenile population on patch $i$ by the patch specific survival probability $s_i$ given in equation (\ref{eq:sj}). In terms of system (\ref{eq:patch}) this means that $x_i^*=s_ix_i$ for $i=1,\dots,n$, $x_i^*=x_i$ for $i=n+1,\dots,2n$. Rewriting the system of equations 
\begin{align*}
x^{*\prime}&=A^*x^*\\
A^*=F^*-V^*,\quad
F^*&=\begin{bmatrix}
0 &SF_{12}\\
0 &0
\end{bmatrix},\quad
V^*=\begin{bmatrix}
V_{11} & 0 \\
V_{21}S^{-1} & V_{22}
\end{bmatrix},
\end{align*}
where $S=\diag(s_1,\dots,s_n)$. The unweighted initial growth rate for the scaled system is then 
\begin{align*}
\tilde{\lambda}_j^*&=\sum_{i=1}^n(SF_{12})_{ij}-\sum_{i=1}^nv_{22ij}\\
&=\sum_{i=1}^n s_if_{12ij}-\sum_{i=1}^nv_{22ij}\\
&=\tilde{\lambda}_j^p.
\end{align*}
Thus the unweighted initial growth rate for the scaled system is equal to the patch-weighted initial growth rate of the unscaled system, $\tilde{\lambda}_j^p$.

\end{proof}

We believe that it is more intuitive to measure reactivity in a stage-structured system using a weighted norm, rather than scaling the juvenile population and  using the unweighted $\ell_1$ norm, but for other systems this may not be the case. Recently Mari et al. \cite{Mari2017} have developed a new measure of reactivity called generalized reactivity, or g-reactivity, so that the reactivity of any specific combination of state variables in a system can be measured, and we demonstrate how to place our work in this context. The general framework of g-reactivity allows the reactivity of only the predator to be measured in a predator-prey system, or a single patch in a metapopulation model. Moreover in a stage-structured model, g-reactivity can be used to allow for a differential contribution of the juvenile and adult populations to the reactivity of the system, and so we can compare the calculation of g-reactivity to our calculation using the weighted $\ell_1$ norm. To calculate the g-reactivity of a system $x'=Ax$, a linear transformation is introduced, $y=Cx$, where $C$ is a matrix that defines the required contribution of each of the state variables, and then reactivity is calculated for $y$ using equation (\ref{eq:reactive}). For system (\ref{eq:patch}), g-reactivity is equivalent to reactivity in our weighted norm, $\max_j \tilde{\lambda}_j^p$, if $C$ is a $2n\times 2n$ identity matrix, but with the first $n$ diagonal entries replaced with $s_1,\dots,s_n$.

Returning to our measure of initial growth rate using a patch weighted norm, we present a couple of examples below to illustrate the calculation of $\tilde{\lambda}_j^p$ in different systems.

\subsection*{Example 4}
Consider a two patch system where juveniles are born onto all patches but only mature into adults on the patch where they were born:
\begin{align}
j_1'&=b_{11}a_1+b_{12}a_2-m_1j_1-d_{j1}j_1 \label{eq:ja1}\\
j_2'&=b_{21}a_1+b_{22}a_2-m_2j_2-d_{j2}j_2 \nonumber\\
a_1'&=m_1j_1-d_{a}a_1\nonumber\\
a_2'&=m_2j_2-d_{a}a_2.\nonumber
\end{align}
Here $j_i$ is the number of juveniles on patch $i$, $a_i$ is the number of adults on patch $i$, $b_{ij}$ is the birth rate of juveniles on patch $i$ from adults on patch $j$, $m_i$ is the maturation rate of juveniles on patch $i$ into adults on patch $i$, $d_{ji}$ is the death rate of juveniles on patch $i$, and $d_a$ is the death rate of adults, which is the same on both patches. The lifecycle graph for this system is shown in Figure (\ref{fig:ja1}).

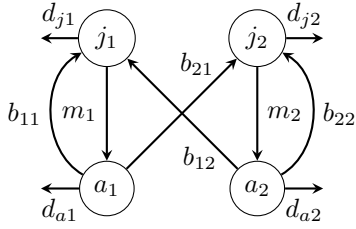
\begin{figure}
\centering
\begin{tikzpicture}
\node (j1) [stage] {$j_1$};
\node (j2) [stage, right of=j1, xshift=1cm] {$j_2$};
\node (a1) [stage, below of=j1, yshift=-1cm] {$a_1$};
\node (a2) [stage, below of=j2, yshift=-1cm] {$a_2$};
\draw [arrow] (a1) to [out=150,in=210] node[left] {$b_{11}$} (j1);
\draw [arrow] (j1) to [out=270, in=90] node[left] {$m_1$} (a1); 
\draw [arrow] (a1) edge  node[left,pos=0.95, xshift=-0.1cm] {$b_{21}$} (j2); 
\draw [arrow] (a2) edge  node[left,pos=0.1] {$b_{12}$} (j1); 
\draw [arrow] (j2) edge  node[right] {$m_{2}$} (a2); 
\draw [arrow] (a2) to [out=30,in=330] node[right] {$b_{22}$} (j2);
\node (j0) [draw=none,fill=none, left of=j1] {};
\node (a0) [draw=none,fill=none, left of=a1] {};
\node (j3) [draw=none,fill=none, right of=j2] {};
\node (a3) [draw=none,fill=none, right of=a2] {};
\draw [arrow] (j1) edge node[above] {$d_{j1}$} (j0);
\draw [arrow] (a1) edge node[below] {$d_{a1}$} (a0);
\draw [arrow] (j2) edge node[above] {$d_{j2}$} (j3);
\draw [arrow] (a2) edge node[below] {$d_{a2}$} (a3);
\end{tikzpicture}
\caption{Digraph for system (\ref{eq:ja1}). Here $b_{ij}$ is the birth rate of juveniles on patch $i$ from adults on patch $j$, $m_i$ is the maturation rate of juveniles on patch $i$ to adults on patch $i$, $d_{ji}$ is the death rate of juveniles on patch $i$ and $d_{ai}$ is the death rate of adults on patch $i$.}
\label{fig:ja1}
\end{figure}

In this case

\begin{align*}
F_{12}=\begin{bmatrix}
b_{11}&b_{12}\\
b_{21}&b_{22}
\end{bmatrix},\quad
V_{11}&=\begin{bmatrix}
m_1+d_{j1}&0\\
0&m_2+d_{j2}
\end{bmatrix},\quad
V_{21}=\begin{bmatrix}
-m_1&0\\
0&-m_2
\end{bmatrix},\\
V_{22}=\begin{bmatrix}
d_a & 0 \\ 0 & d_a
\end{bmatrix}, \quad
-V_{21}V_{11}^{-1}&=\begin{bmatrix}
\frac{m_1}{m_1+d_{j1}}&0\\
0&\frac{m_2}{m_2+d_{j2}}
\end{bmatrix},\\
\tilde{\lambda}_1^p&=s_1b_{11}+s_2b_{21}-d_{a}\\
\tilde{\lambda}_2^p&=s_1b_{12}+s_2b_{22}-d_{a}\\
s_1&=\frac{m_1}{m_1+d_{j1}}\\
s_2&=\frac{m_2}{m_2+d_{j2}}.
\end{align*}

If we look at $\tilde{\lambda}_1^p$, we see that $s_1$ is the probability that a juvenile born onto patch 1 survives to become an adult and it is multiplying $b_{11}$, the birth rate of juveniles onto patch 1 from adults on patch 1. Therefore the first component of $\tilde{\lambda}_1^p$ represents the rate of birth of new juveniles onto patch 1 from one adult on patch 1, but scaled by the probability that these juveniles survive to become adults. Likewise the second component, $s_2b_{21}$, is the rate of birth of new juveniles onto patch 2 from one adult on patch 1, scaled by the probability that those juveniles also become adults. Thus $\tilde{\lambda}_1^p$ is the initial growth rate of the total population, scaled in terms of the contribution to the adult population, when the population begins with one adult on patch 1.

\subsection*{Example 5}

We also consider a system in which juveniles are born onto the same patch as adults, but can then migrate between patches as they mature into adults:

\begin{align}
j_1'&=b_{11}a_1-m_{11}j_1-m_{21}j_1-d_{j1}j_1\label{eq:ja2}\\
j_2'&=b_{22}a_2-m_{22}j_2-m_{12}j_2-d_{j2}j_2\nonumber\\
a_1'&=m_{11}j_1+m_{12}j_2-d_{a}a_1\nonumber\\
a_2'&=m_{22}j_2+m_{21}j_1-d_{a}a_2.\nonumber
\end{align}
from which we calculate
\begin{align*}
F_{12}=\begin{bmatrix}
b_{11} & 0\\ 0 &b_{22}
\end{bmatrix}, \quad
&V_{11}=\begin{bmatrix}
m_{11}+m_{21}+d_{j1} & 0 \\ 0 & m_{12}+m_{22}+d_{j2}
\end{bmatrix}, \quad
V_{21}=\begin{bmatrix}
-m_{11} & -m_{12}\\ -m_{21} &-m_{22}
\end{bmatrix},\\
V_{22}=\begin{bmatrix}
d_a & 0 \\ 0 & d_a
\end{bmatrix}, \quad
-&V_{21}V_{11}^{-1}=\begin{bmatrix}
\frac{m_{11}}{m_{11}+m_{21}+d_{j1}} & \frac{m_{12}}{m_{12}+m_{22}+d_{j2}}\\
\frac{m_{21}}{m_{11}+m_{21}+d_{j1}} & \frac{m_{22}}{m_{12}+m_{22}+d_{j2}}
\end{bmatrix},\\
\tilde{\lambda}_1^p&=s_1b_{11}-d_a\\  
\tilde{\lambda}_2^p&=s_2b_{22}-d_a\\  
s_1&=\frac{m_{11}+m_{21}}{m_{11}+m_{21}+d_{j1}}\\
s_2&=\frac{m_{12}+m_{22}}{m_{12}+m_{22}+d_{j2}}.
\end{align*}

\begin{figure}
\centering
\begin{tikzpicture}
\node (j1) [stage] {$j_1$};
\node (j2) [stage, right of=j1, xshift=1cm] {$j_2$};
\node (a1) [stage, below of=j1, yshift=-1cm] {$a_1$};
\node (a2) [stage, below of=j2, yshift=-1cm] {$a_2$};
\draw [arrow] (a1) to [out=150,in=210] node[left] {$b_{11}$} (j1);
\draw [arrow] (j1) to [out=270, in=90] node[left, xshift=0.1cm] {$m_{11}$} (a1); 
\draw [arrow] (j2) edge  node[left,pos=0.05, xshift=-0.1cm] {$m_{12}$} (a1); 
\draw [arrow] (j1) edge  node[left,pos=0.9] {$m_{21}$} (a2); 
\draw [arrow] (j2) edge  node[right,xshift=-0.1cm] {$m_{22}$} (a2); 
\draw [arrow] (a2) to [out=30,in=330] node[right] {$b_{22}$} (j2);
\node (j0) [draw=none,fill=none, left of=j1] {};
\node (a0) [draw=none,fill=none, left of=a1] {};
\node (j3) [draw=none,fill=none, right of=j2] {};
\node (a3) [draw=none,fill=none, right of=a2] {};
\draw [arrow] (j1) edge node[above] {$d_{j1}$} (j0);
\draw [arrow] (a1) edge node[below] {$d_{a1}$} (a0);
\draw [arrow] (j2) edge node[above] {$d_{j2}$} (j3);
\draw [arrow] (a2) edge node[below] {$d_{a2}$} (a3);
\end{tikzpicture}
\caption{Digraph for system (\ref{eq:ja2}). Here $b_{ii}$ is the birth rate of juveniles on patch $i$ from adults on patch $i$, $m_{ij}$ is the maturation rate of juveniles on patch $j$ to adults on patch $i$, $d_{ji}$ is the death rate of juveniles on patch $i$ and $d_{ai}$ is the death rate of adults on patch $i$.}
\label{fig:ja2}
\end{figure}
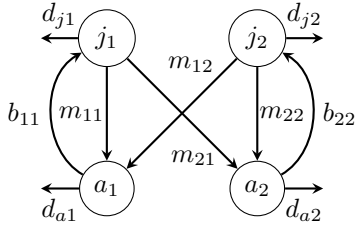

Now if we examine the first component of $\tilde{\lambda}_1^p$, $s_1=(m_{11}+m_{21})/(m_{11}+m_{21}+d_{j1})$, we can see that because juveniles from patch 1 can now migrate (as they mature) to both patches, $s_1$ is the probability that juveniles from patch 1 become adults on either patch. Likewise $s_2$ is the probability that juveniles from patch 2 become adults on either patch.

Here we have shown that if we use a weighted $\ell_1$ norm to scale the initial growth rate so that the juvenile population is scaled by the patch specific probability that juveniles become adults, then our scaled initial growth rate, $\tilde{\lambda}_j^p$ matches the biological intuition that we would like when measuring initial growth of the population. It is positive if the population, scaled so that every individual is measured by its contribution to the adult population, is growing, and negative if the population is decreasing. Measures of reactivity and attenuation then also represent their intuitive biological properties, and we are no longer in the situation (as if the initial growth rate was unscaled) that most marine metapopulations are reactive.

It is also possible to create a weighted norm where the juvenile populations on each patch are weighted by the same probability of survival, rather than by patch-specific probabilities $s_i$. In some cases it may be useful to scale all patches by the same survival probability, though under this weighted norm reactivity no longer corresponds exactly to the intuitive biological meaning mentioned previously.

\section{Discussion}

Transient dynamics often differ drastically from the asymptotic dynamics of a system and require different analytical tools. In this paper we have presented a unifying framework for analysing transient dynamics in marine and other birth-jump metapopulations, from the choice of norms to the incorporation of stage structure. We began by using the $\ell_1$ norm to define reactivity and attenuation in single species metapopulations and used examples to compare reactivity in the $\ell_1$ norm with reactivity in the more commonly used $\ell_2$ norm. We presented a couple of two patch models that gave rise to long transients: one stable system that exhibits a long period of growth before eventual decay, and one unstable system that exhibits a long period of decay before growth. In marine metapopulations, where patches are connected via larval dispersal, we examined how strong advective flow, coupled with a large number of patches, can lead to large transient growth. We then connected the initial growth rate of the metapopulation to the source-sink distribution of patches and lastly we demonstrated how to measure reactivity meaningfully in stage-structured marine metapopulations.

We are by no means the first to analyze the transient dynamics of systems, and in fact there has been an increase in the study of transient dynamics over the last couple of decades. In a pair of recent papers, several authors  have identified certain mechanisms as the main causes of long transients in ecological systems \cite{Hastings2018, Morozov2020}. These identified mechanisms that cause the long transients present in the examples in this paper are slow-fast systems, crawl-bys, and high dimensionality. Slow-fast systems cause long transients when the system rapidly converges to a slow manifold, then moves slowly towards or away from an equilibrium, depending on the stability of the system. This occurs in both examples in section \ref{sec:arblarge}. The second example in section   \ref{sec:arblarge} is also an instance of a crawl-by where the initial condition is near a saddle equilibrium but the movement away from the equilibrium occurs over a long timescale. Lastly in section  \ref{sec:largepatch} we explicitly demonstrated how increasing the dimension of a system, by increasing the patch number in a linear metapopulation,  leads to longer transients.

Our work also reinforces the fact that reactivity is a property specific to the norm under which it is measured. This has been mentioned in the first paper on reactivity by Neubert and Caswell\cite{Caswell1997}, who also recognize that it is always possible to find a norm such that a stable system is never reactive. It has also been noted by Lutscher and Wang \cite{Lutscher2020}, who mention that reactivity must be analysed in the dimensional version of a system rather than the non-dimensionalized version. The reactivity may be different between the two systems but the dimensional system is where the measure of reactivity is biologically meaningful. When analysing reactivity in metapopulations this fact is significant in two ways: first by using the $\ell_1$ norm rather than the $\ell_2$ norm to measure reactivity we can explicitly measure the growth rate of a population, and second by using a weighted $\ell_1$ norm we prevent the juvenile population from disproportionately affecting the reactivity of the system.

Differentially weighting certain classes of a population to calculate reactivity has been mentioned in passing by Verdy and Caswell \cite{Verdy2008}, and more extensively by Mari et al. \cite{Mari2017} who developed a new measure of reactivity called general reactivity, or g-reactivity. This is a method of only measuring the reactivity of the components of interest in a population, e.g. predators in a predator-prey model, and can also be used more generally to scale the contribution of different components of the population. Our method of using a weighted $\ell_1$ norm for stage-structured models has an equivalent formulation using g-reactivity that is discussed in  section \ref{sec:weight}, though Mari et al. \cite{Mari2017} use the $\ell_2$ norm to measure reactivity, rather than the $\ell_1$ norm, and are thus using a different measure of population growth.

While we believe the $\ell_1$ norm is the most biologically relevant norm to measure reactivity, we are among the first to use it to analyse reactivity in continuous time models. Townley et al. \cite{Townley2007} show how to calculate reactivity for stage-structured models in continuous time using the $\ell_1$ norm, but in following papers proceed to analyse reactivity in the $\ell_1$ norm only in discrete time systems \cite{Townley2008, Stott2010, Stott2011, Stott2011a}. Most authors measure reactivity with the $\ell_2$ norm, presumably due to the nice mathematical property that reactivity in the $\ell_2$ norm can be measured simply as $\lambda(H(A))$, where $H(A)=(A+A^T)/2$ \cite{Caswell1997, Caswell2005, Neubert2002, Neubert2004, Verdy2008, Lutscher2020, Snyder2010}. But while mathematically tractable, the biological meaning of Euclidean distance ($\ell_2$) is less clear than population size ($\ell_1$) and as shown in section \ref{sec:reacl1l2}, there are times when reactivity in $\ell_2$ does not correspond to an increase in population size.

In addition to being biologically meaningful, reactivity in the $\ell_1$ norm is also simple to calculate for single species metapopulations due to the fact that the populations on each patch are non-negative, and thus the full solution remains in the non-negative cone. Mathematically, a metapopulation governed by $x'=Ax$ will be reactive in the $\ell_1$ norm if $\max_j \sum_{i=1}^n a_{ij}>0$. Geometrically, a system is reactive in the $\ell_1$ norm if the dot product of the derivative vector of any initial condition and the outward normal vector of the plane $x_1+x_2+\dots x_n=1$ is positive.  If we instead want to consider the reactivity in a system where this assumption is relaxed, say a predator-prey metapopulation, then the mathematical calculation using the $\ell_1$ norm is no longer so simple, but our geometric intuition is similar. Instead of verifying if the dot product of the derivative vector of any initial condition and the outward normal vector of the plane $x_1+x_2+\dots x_n=1$ is positive, see for example Figure \ref{fig:reactl1}, we would instead need to take the dot product with the outward normal vector to the hypercube $|x_1|+|x_2|+\dots |x_n|=1$. If we continue to use $\max_j \sum_{i=1}^n a_{ij}>0$ to calculate reactivity for systems that do not remain in the non-negative cone then it is possible for systems to not be reactive and yet also be unstable, thus an interesting area for future work would be to mathematically formulate reactivity in terms of the matrix $A$ for these metapopulations.

No matter the norm in which reactivity and attenuation are measured, they are defined in terms of the linearization of a non-linear system around an equilibrium. As mentioned in the Introduction,  
reactivity and attenuation are therefore most relevant around hyperbolic equilibria, where the dynamics of the non-linear system are well approximated by the linearized system. Another caveat is that it is possible for an equilibrium of a non-linear system to not be reactive,  but for a perturbation of the non-linear system to still cause a large excursion away from the stable equilibrium before eventually returning. Excitable systems, such as the FitzHugh-Nagumo system, have stable equilibria with attracting regions, but small perturbations still trigger large excitations \cite{FitzHugh1961, Nagumo1962}. These systems may not be reactive from the linearized definition of reactivity, but can still exhibit similar behaviour to reactivity in the non-linear system, given a sufficient perturbation.

The final extension that we would like to highlight is the relationship between reactivity of continuous time models and reactivity of their discrete counterparts. Many marine metapopulations are modelled in discrete time due to yearly breeding cycles, however some are modelled in discrete time due to ease of simulation. For these models, where the time step is on the order of hours or days, we can connect the reactivity of the continuous time system with the discrete time system using a Taylor expansion. The continuous time system, $x'=Ax$, has the solution $x(t)=e^{At}x_0$ that could be sampled at discrete time steps $\tau$ to create the discrete time system $x(t+\tau)=B x(t),$ where $B=e^{A \tau}$.

The continuous time system $x'=Ax$ is reactive in $\ell_1$ if $A$ has a positive column sum (Lemma \ref{lemma:reactive}). In discrete time the system $x(t+\tau)=Bx(t)$ is reactive in $\ell_1$ if $B$ has a column sum that is greater than 1 \cite{Townley2007}. Assuming $\tau$ is a small timestep then we can approximate $B=e^{A \tau}=I+A \tau +O(\tau^2)$. Thus we can see that if the system is reactive in continuous time, i.e. there is a positive column sum of $A$, then we can find a sufficiently short time step $\tau$ such that the discrete time system is also reactive, i.e. there is a column sum of $B$ greater than 1. However for a predetermined time-step $\tau$ there are continuous time systems $x'=Ax$ that are reactive but for which their discrete counterparts $x(t+\tau)=e^{A\tau}x(t)$ are not reactive. One such example is system (\ref{eq:asymgrow1}) with $\epsilon=0.9$ and $\tau=1$. 

Lastly, we hope our work can be used to better understand the transient dynamics in marine metapopulations for which habitat patches have already been classified as sources and sinks. For these systems the transient dynamics that may occur following a disturbance depend directly on the new distribution of the population. If the remaining population is distributed amongst sink patches then it initially declines, even if it eventually recovers. Likewise if the population is distributed amongst source patches then it initially grows, though this growth may not necessarily occur on the source patch itself. In addition, the relationship between transient dynamics and the source-sink distribution of marine metapopulations may also be useful when examining the dynamics that can occur following the protection of new marine environments, such as newly implemented Marine Protected Areas.

\appendix
\section{Two patch example with migration} 
\label{appendix:a}

Consider a different metapopulation on two patches where individuals are born only onto their patch, but can now also migrate between patches. This is the case for many terrestrial species that live on patchy landscapes, where individuals can migrate between habitat patches. The dynamics of this metapopulation is then described with the following set of ODEs:

\begin{align}
x_1'&=b_{1}x_1-m_{21}x_1+m_{12}x_2-d_1x_1\label{eq:2patch1}\\
x_2'&=b_{2}x_2-m_{12}x_2+m_{21}x_1-d_2x_2\nonumber,
\end{align}
where $b_{i}$ is the birth rate on patch $i$, $m_{ij}$ is the migration rate from patch $j$ to patch $i$, and $d_i$ is the death rate on patch $i$. The lifecycle graph for this system is shown in Figure \ref{fig:patch2}. 

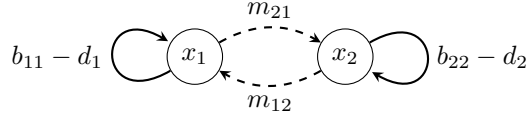
\begin{figure}
\centering
\begin{tikzpicture}
\node (x1) [stage] {$x_1$};
\node (x2) [stage, right of=x, xshift=1cm] {$x_2$};
\draw [arrow, dashed] (x1) to [out=30,in=150] node[above] {$m_{21}$} (x2);  
\draw [arrow, dashed] (x2) to [out=210,in=330] node[below] {$m_{12}$}(x1);
\draw [arrow] (x1) edge[out=210,in=150,loop] node[left] {$b_{11}-d_1$} (x1);
\draw [arrow] (x2) edge[out=30,in =330,loop] node[right] {$b_{22}-d_2$} (x2);
\end{tikzpicture}
\caption{Digraph for system (\ref{eq:2patch1}). The directed edges represent the movement of individuals from the outgoing patch to the incoming patch. The self loops are the birth rate minus the death rate on a patch.}
\label{fig:patch2}
\end{figure}

We again decompose $A=F-V$ and construct our next generation matrix, $K=FV^{-1}$, though now $V$ is not diagonal nor in the same form as required for Lemma \ref{thm:Riyi}.

\begin{align*}
A&=\begin{bmatrix}
b_{1}-m_{21}-d_1 & m_{12}\\
m_{21} & b_{2}-m_{12}-d_2
\end{bmatrix}\\
F&=\begin{bmatrix}
b_{1} & 0\\
0 & b_{2}
\end{bmatrix}\\
V&=\begin{bmatrix}
d_1+m_{21} & -m_{12}\\
-m_{21} & d_2+m_{12}
\end{bmatrix}\\
K=FV^{-1}&=\begin{bmatrix}
\frac{b_1(d_2+m_{12})}{d_1d_2+d_1m_{12}+d_2m_{21}} & \frac{b_1m_{12}}{d_1d_2+d_1m_{12}+d_2m_{21}}\\
\frac{b_2m_{21}}{d_1d_2+d_1m_{12}+d_2m_{21}} & \frac{b_2(d_1+m_{21})}{d_1d_2+d_1m_{12}+d_2m_{21}}
\end{bmatrix}
\end{align*}

The entries of $K$ may seem counter intuitive, but they represent the infinite sum of a geometric series. Consider the first entry, $k_{11}$. If we are tracking the total number of new individuals produced by one individual starting on patch 1, then this individual can either produce new offspring in patch 1 immediately, or it can migrate to patch 2, then back to patch 1 and produce offspring, or migrate again and produce more offspring. The entry $k_{11}$ is then the birth rate in patch $1$ multiplied by the residence time in patch $1$, multiplied by a geometric series where the ratio is the probability of surviving the migration from patch 1 to patch 2 and then back to patch 1. Mathematically 

\begin{equation*}
k_{11}=\frac{b_1}{m_{21}+d_1}\left[\sum_{i=0}^{\infty} \left(\frac{m_{21}}{m_{21}+d_1}\frac{m_{12}}{m_{12}+d_2}\right)^i\right]
\end{equation*}

Now consider $\lambda_1$ and $R_1$,  the measures of transient growth for patch 1: 

\begin{align*}
\lambda_1&=b_1-d_1\\
R_1&=\frac{b_1d_2+b_1m_{12}+b_2m_{21}}{d_1d_2+d_1m_{12}+d_2m_{21}}.
\end{align*}

We can see that 
\begin{equation*}
\lim_{d_2\rightarrow \infty}R_1=\frac{b_1}{d_1+m_{21}}.
\end{equation*}
Therefore even if $\lambda_1>0$, and so $b_1>d_1$, as $m_{21}$ becomes large, $R_1<1$.

In the other direction, if $d_1=2b_1$, then 
\begin{equation*}
\lim_{b_1\rightarrow 0} R_1=\frac{b_2}{d_2}.
\end{equation*}
Therefore even if $R_1>1$, we can still have $\lambda_1=b_1-d_1<0$. We present this example to demonstrate that if the assumptions of Lemma \ref{lemma:R0Ri} are not met, there is no longer a one-to-one relationship between $\lambda_j$ and $R_j$.

Now we might also consider moving the off diagonal entries of $V$ into $F$ so that $V$ becomes diagonal. This is similar to considering migrating individuals as new individuals entering a patch. In this case

\begin{align*}
F&=\begin{bmatrix}
b_{1} & m_{12}\\
m_{21} & b_{2}
\end{bmatrix}\\
V&=\begin{bmatrix}
d_1+m_{21} & 0\\
0 & d_2+m_{12}
\end{bmatrix}\\
K=FV^{-1}&=\begin{bmatrix}
\frac{b_1}{d_1+m_{21}} & \frac{m_{12}}{d_2+m_{12}}\\
\frac{m_{21}}{d_1+m_{21}} & \frac{b_2}{d_2+m_{12}}
\end{bmatrix}
\end{align*}

Here $\lambda_1=b_1-d_1>0$ is equivalent to $b_1>d_1$, which is then equivalent to $R_1=(b_1+m_{21})/(d_1+m_{21})>1$. However, in this case $R_1$ no longer tracks the total number of new individuals produced on all patches over one generation. This new $R_1$ could perhaps be interpreted as the total number of new individuals produced on patch 1 by a single individual on patch 1 before that individual dies or migrates, plus the probability that the individual migrates to patch 2 before it dies. However, this will no longer be a    biologically useful measure of a source or a sink.

\section*{Acknowledgments}
The authors would like to thank the members of the Lewis Lab for many helpful discussions and suggestions. PDH gratefully acknowledges an NSERC-CGSM scholarship, Queen Elizabeth II scholarship, and Alberta Graduate Excellence Scholarship; MAL gratefully acknowledges the Canada Research Chair program and an NSERC Discovery grant; and PvdD gratefully acknowledges an NSERC Discovery grant.

\bibliographystyle{siamplain}
\bibliography{C:/Users/Peter/Documents/Research/TransientPaper/transient}

\end{document}